\documentclass[aps,prl,reprint,showpacs,superscriptaddress,nofootinbib]{revtex4-2}
\usepackage[utf8]{inputenc}
\usepackage[T1]{fontenc}
\usepackage{amsmath, amsthm, amssymb, mathtools, dsfont} 
\usepackage{fixmath}
\usepackage{times}
\usepackage{graphicx}  
\usepackage{dcolumn} 
\usepackage{bm,bbm} 
\usepackage[normalem]{ulem}
\usepackage[usenames,dvipsnames]{xcolor}
\usepackage{esint}
\usepackage{paralist}
\usepackage[normalem]{ulem}

\usepackage{geometry}
\geometry{tmargin=2cm,bmargin=2cm,lmargin=1.5cm,rmargin=1.5cm}

\definecolor{myurlcolor}{rgb}{0,0,0.7}
\definecolor{myrefcolor}{rgb}{0.8,0,0}
\usepackage[unicode=true,pdfusetitle,  bookmarks=true,bookmarksnumbered=false,bookmarksopen=false, breaklinks=false,pdfborder={0 0 0},backref=false,colorlinks=true, linkcolor=myrefcolor,citecolor=myurlcolor,urlcolor=myurlcolor]{hyperref}
\DeclareGraphicsExtensions{.png,.pdf,.eps}
\graphicspath{ {./figures/} }
\usepackage{epstopdf}


\usepackage{mathtools}

\setlength{\belowcaptionskip}{-10pt}

\makeatletter

 \theoremstyle{plain}
 
 \theoremstyle{plain}
 \newtheorem{lem}{Lemma}
 \theoremstyle{plain}
 \newtheorem{thm}{Theorem}
 \theoremstyle{plain}
  \newtheorem{prop}{Proposition}
 \theoremstyle{plain}
 
 \theoremstyle{plain}
 
 \theoremstyle{plain}
 
 \theoremstyle{remark}
 \newtheorem*{rem*}{Remark}
 \theoremstyle{plain}
  \newtheorem{rem}{Remark}
\newtheorem{definition}{Definition}
\theoremstyle{plain}
 \newtheorem*{conj*}{Conjecture}
 \theoremstyle{plain}

\renewcommand{\sout}[1]{}
\newcommand{\er}[1]{} 
\newcommand{\m}{m} 
 
\newcommand{\e}{\mathrm{e}}
\newcommand{\ot}{\otimes}
\newcommand{\unity}{\mathbbmss{1}}

\renewcommand{\exp}{\mathrm{exp}}
\DeclareMathOperator{\tr}{tr}

\renewcommand{\H}{\mathcal{H}}



\newcommand{\Complex}{\mathbb{C}} 
\newcommand{\R}{\mathbb{R}} 
\newcommand{\M}{\mathbf{M}} 
\newcommand{\N}{\mathbf{N}} 
\renewcommand{\L}{\mathbf{L}} 
 
\newcommand{\Q}{\mathcal{Q}} 


\newcommand{\nop}{24}
\newcommand{\noq}{1299}

\renewcommand{\P}{\mathcal{P}} 
\newcommand{\PPP}{\mathbf{P}} 

\newcommand{\pr}{ \mathbb{P}} 




\newcommand{\q}{\mathrm{q}_\mathrm{succ}}

\newcommand{\Unitary}{\mathrm{U}} 

\newcommand{\norm}[1]{\left|\left|  \ #1  \ \right| \right|}
\newcommand{\HS}[1]{\left| \left| \ #1 \ \right| \right|_2}

\newcommand{\NN}[1]{\N_{{X}_{#1}}}




\newcommand{\tv}{\mathrm{d}_{\mathrm{TV}}}

\newcommand{\HE}[1]{\left\langle #1 \right\rangle_{\mathrm{Haar}}}

\newcommand{\Sc}[1]{S^{#1}_{\Complex}} 
\newcommand{\Sr}[1]{S^{#1}} 
\newcommand{\Sx}{S_{\scriptscriptstyle{X}}} 
\newcommand{\Ej}{E_{\scriptscriptstyle{X}}} 
  
\newcommand{\Unitaryx}{U_{\scriptscriptstyle{X}}}
\newcommand{\Wx}{W_{\scriptscriptstyle{X}}}

\newcommand{\intH}{{\int} d\mu_\bo(U)}


\newcommand{\all}[2]{w_{\scriptstyle #1 }^{\scriptscriptstyle #2}}

\newcommand{\so}{m}
\newcommand{\bo}{n}

\newcommand{\Sm}{\mathbb{S}_{\so}}

\renewcommand{\S}{\mathbb{S}}

\newcommand{\Rb}{R^{(\so)}\left(\M\right)}     

\newcommand{\np}[1]{ \left| \left| \ P \ket{ #1} \ \right| \right|}
\newcommand{\nup}[1]{ \left| \left| \ PU \ket{#1} \ \right| \right|}

\newcommand{\psucc}[1]{\mathrm{P}_{\mathrm{succ}} \left( \mathcal{E}, #1 \right)}

\newcommand{\sk}[1]{\scriptstyle | #1 \rangle}

\newcommand{\rbracket}[1]{\left(#1\right)} 
\newcommand{\sbracket}[1]{\left[#1\right]} 
\newcommand{\cbracket}[1]{\left\{#1\right\}} 

\newcommand{\p}[1]{\mathrm{\mathbf{#1}}}

\newcommand{\pv}[2]{\mathbf{p}\rbracket{#1|#2}}

\definecolor{dukeblue}{rgb}{0.0, 0.0, 0.61}
\definecolor{cadmiumgreen}{rgb}{0.0, 0.42, 0.24}


\global\long\global\long\global\long\def\bra#1{\mbox{\ensuremath{\langle#1|}}}
\global\long\global\long\global\long\def\ket#1{\mbox{\ensuremath{|#1\rangle}}}
\global\long\global\long\global\long\def\bk#1#2{\mbox{\ensuremath{\ensuremath{\langle#1|#2\rangle}}}}
\global\long\global\long\global\long\def\kb#1#2{\mbox{\ensuremath{\ensuremath{\ensuremath{|#1\rangle\!\langle#2|}}}}}

\makeatother
\renewcommand{\ket}[1]{\left| #1 \right>} 
\renewcommand{\bra}[1]{\left< #1 \right|} 
\newcommand{\braket}[2]{\langle #1 | #2 \rangle} 
\newcommand{\ketbra}[2]{\left| #1 \rangle\langle #2 \right|} 

\begin{document} 	
\title{Implementation of quantum measurements using classical resources and only a single ancillary qubit}
	 
		\author{Tanmay Singal}
	\email{tanmaysingal@gmail.com}
	\affiliation{Department of Analysis, Budapest University of Technology and Economics,
1111 Budapest, Egry József u. 1., Hungary}
\affiliation{Center for Theoretical Physics, Polish Academy of Sciences, Al. Lotnik\'ow 32/46, 02-668 Warsaw, Poland}

	\author{Filip B. Maciejewski}
	\email{filip.b.maciejewski@gmail.com}
	\affiliation{Center for Theoretical Physics, Polish Academy of Sciences, Al. Lotnik\'ow 32/46, 02-668 Warsaw, Poland}

	\author{Micha\l\ Oszmaniec}
	\email{oszmaniec@cft.edu.pl}
	\affiliation{Center for Theoretical Physics, Polish Academy of Sciences, Al. Lotnik\'ow 32/46, 02-668 Warsaw, Poland}

\begin{abstract}
We propose a scheme to implement general quantum measurements, also known as Positive Operator Valued Measures (POVMs) in dimension $d$ using only classical resources and a single ancillary qubit. Our method is based on probabilistic implementation of $d$-outcome measurements which is followed by postselection of some of the received outcomes. 
We conjecture that success probability of our scheme is larger than a constant independent of $d$ for all POVMs in dimension $d$. 
Crucially, this conjecture implies the possibility of realizing arbitrary nonadaptive quantum measurement protocol on $d$-dimensional system using a single auxiliary qubit with only a \emph{constant} overhead in sampling complexity.
We show that the conjecture holds for typical rank-one Haar-random POVMs in arbitrary dimensions. Furthermore, we carry out extensive numerical computations showing success probability above a constant for a variety of extremal POVMs, including SIC-POVMs in dimension up to $\noq$.
Finally, we argue that our scheme can be favorable for experimental realization of POVMs, as noise compounding in circuits required by our scheme is typically substantially lower than in the standard scheme that directly uses Naimark’s dilation theorem.

\end{abstract}
\maketitle	
 
Quantum measurements recover classical information stored in quantum systems and, as such, constitute an essential part of virtually any quantum information protocol. Every physical platform has its native measurements that can be realized with relative ease. In many cases, the class of easily implementable measurements contains projective (von Neumann) measurements. However, there are numerous applications  \cite{Knill1997, Briegel2009,Gisin2007, Bergou2010,Braunstein1994,Braunstein1996,Toth2014,Pirandola2018, Degen2017} in which more general quantum measurements, so called Positive-Operator-Valued Measures (POVMs), need to be implemented. 
Implementation of these measurements requires additional resources. A recent generalization  \cite{Oszmaniec17} of  Naimark's dilation theorem \cite{Peres2002} showed that the  most general measurement on $N$ qubits requires $N$ auxiliary qubits, when projective measurements can be implemented on the combined system in a randomized manner.

From the perspective of implementation in near-term quantum devices \cite{Preskill2018}, it is desirable to implement arbitrary POVMs with fewer resources.
Particularly, one would like to reduce the number of auxiliary qubits needed to implement a complex quantum measurement. 
A related problem is to quantify the relative power that generalized measurements in $d$-dimensional quantum systems have with respect to projective measurements in the same dimension. 
While POVMs appear as natural measurements for a variety of quantum information tasks: quantum state discrimination \cite{QuantumStateDisriminationRev}, quantum tomography \cite{Derka1998,Renes2004,OptTomography}, multi-parameter metrology \cite{Ragy2016,Szcyykulska2016}, randomness generation \cite{Acin2016}, entanglement \cite{Shang2018}  and nonlocality detection \cite{Vertesi2010},  hidden subgroup problem \cite{Bacon2006,HSP}, port-based-teleportation \cite{Ishizaka2008,Studzinski2017port,Mozrzymas2018optimal}, to name just a few. 
It is, however, not clear in general what quantitative advantage the more complex measurements offer over their simpler projective counterparts. 
This is because of the possibility to realize non-projective quantum measurements via randomization and post-processing of simpler measurements \cite{Davies1976, Giulio2004, Buscemi2005, DAriano2005, Ali2009,Oszmaniec17,Oszmaniec19}. 
Specifically, taking convex combinations of projective measurements can result in implementation of a priori quite complicated nonprojective POVMs \cite{Oszmaniec17,Oszmaniec19}.

In this work we advance understanding of the relative power between projective and generalized measurements by focusing on a simpler problem, namely the relation between $d$-outcome POVMs and general (with arbitrary number of outcomes) POVMs acting on a $d$-dimensional Hilbert space $\H\approx\Complex^d$.  
We find a strong evidence that general quantum measurements do not offer an asymptotically increasing advantage over $d$-outcome POVMs for  general quantum state discrimination problems \cite{QuantumStateDisriminationRev}, as $d$ tends to infinity.
Specifically, we generalize the method of POVM simulation from \cite{Oszmaniec19} based on randomized implementation of  restricted-class POVMs, followed by post-processing and postselection (defined later, see also Fig. \ref{fig:general_figure}).  
Here by postselection we mean disregarding certain measurement outcomes and accepting only the selected ones. 
In \cite{Oszmaniec19} it was shown that postselection allows to implement arbitrary POVM on $\Complex^d$ using only projective measurements and classical resources.
This, however, comes with a cost - the method outputs a sample from a target quantum measurement with success probability $\q=\frac{1}{d}$. 
In this work we find that, surprisingly, there exists a protocol that allows to simulate a very broad class of POVMs on $\Complex^d$ via $d$-outcome POVMs and postselection with success probability $\q$ above a constant which is independent on the dimension $d$. 
Importantly, our construction ensures $d$-outcome POVMs used in the simulation can be implemented using projective measurements in Hilbert space of dimension $2d$.
Therefore, our method gives a way to implement quantum measurements on $\Complex^d$ using only a single auxiliary qubit and projective measurements with constant success probability. 
We note that there exist schemes implementing arbitrary POVMs on $\Complex^d$ using a sequence of von Neumann instruments (i.e., a description of quantum measurements which includes post-measurement state of the system) on a system extended by a single auxiliary qubit \cite{Anderson2008,Bouda2020}. 
Our method is potentially simpler to implement as, in a given round of the experiment, only a single projective measurement has to be realized on the extended system and post-measurement states need not to be considered.

\begin{figure*}[t]
  \includegraphics[width=\linewidth]{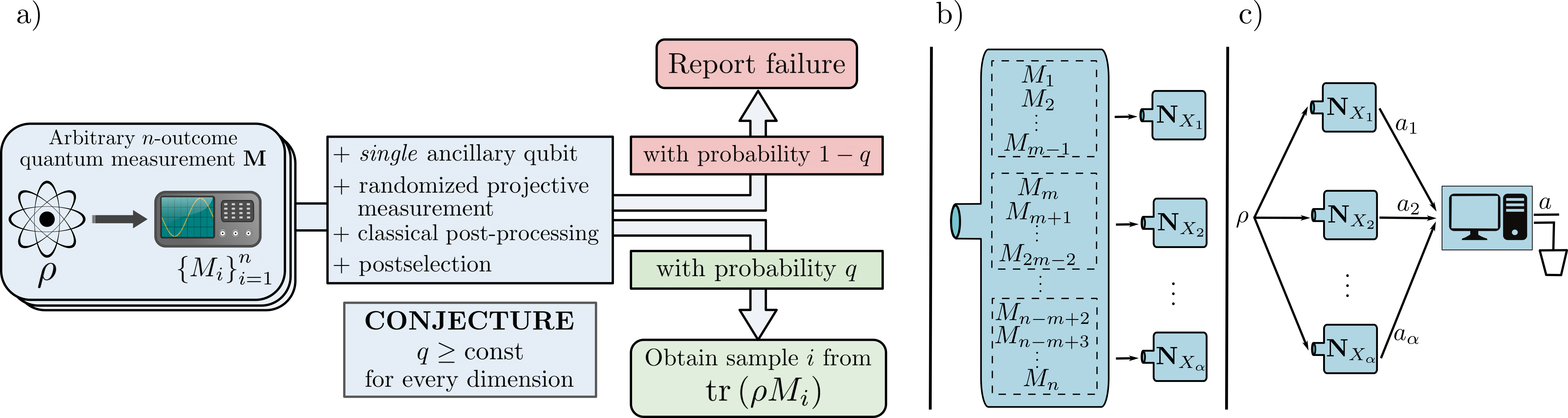}
    \caption{
    Implementation of a target measurement $\M$ with POVMs having at most $\so$ outcomes and postselection.
    Left figure illustrates a) general idea of the scheme, while in the right figure the method is illustrated in more detail -- in
    b), the $\so$-outcome POVMs $\NN{j}$ are constructed using effects of $\M$ that correspond to different subsets $X_\gamma$ forming a partition of $[\bo]$ into subsets of cardinality $\so-1$ (figure shows the standard partition and effects of $[\bo]$: $X_1 = \left\{ 1,2,\cdots, \so-1\right\}$, $X_2 = \left\{ \so, \cdots, 2 \so -2  \right\}$, etc. )  In c), POVMs $\N^{X_\gamma}$ are  implemented probabilistically and the resulting outcomes $a_i$ undergo suitable post-processing and post-selection steps which simulate $\M$.
    }
    \label{fig:general_figure}
\end{figure*}

While we do not prove that the success probability $\q$ of our scheme is lower bounded by a dimension-independent constant for any POVMs on $\Complex^d$, we give strong evidence that this is indeed the case.
First, we prove that for generic $d$-outcome Haar-random rank-one POVMs in $\Complex^d$ \cite{HJMN20} the success probability is above $6.5\%$ (numerically we observe $\approx 25\%$).  
We also support our conjecture by numerically studying specific examples of symmetric informationally complete POVMs (SIC-POVMs) \cite{Scott2010symmetric, Appleby2005, Fuchs2017SIC} and for a class of nonsymmetric  informationally complete POVMs \cite{DAriano2004} (IC-POVMs), both for dimensions up to $\noq$. 
As the dimension increases, we observe that the success probability $\q$ both for SIC-POVMs and IC-POVMs is $\approx 1/5$.
Importantly, if true, our conjecture implies that \emph{any} non-adaptive measurement protocol can be realized using only single ancillary qubit with a sampling overhead that does not depend on the system size.

Finally, our scheme gives a possibility of more reliable implementation of complicated POVMs in noisy quantum devices. To support this claim, we employ the noise model used in Google's recent demonstration of quantum computational advantage \cite{Google2019}. We make the following comparison between our method and the standard Naimark's scheme of POVM implementation: for implementing  typical random POVMs on $N$ qubits, the fidelity of circuits which implement our scheme is exponentially higher than for Naimark's implementation. This is due to the lower number of ancillary qubits required.
 
\textit{Preliminaries---} 
We start by introducing notation and the concepts necessary to explain our POVM implementation scheme. 
We will be studying generalized quantum measurements on $d$-dimensional Hilbert space $\H\approx \Complex^d$.  
An $\bo$-outcome POVM, is an $\bo$-tuple of linear operators on $\Complex^d$ (usually called effects), i.e., $\M  = \left(M_1,M_2,\cdots,M_\bo \right)$, satisfying  $M_i \ge 0$ and  $\sum_{i=1}^\bo M_i = \unity$, where $\unity$ is identity on $\Complex^d$. A POVM $\mathbf{P}=\left(P_1,P_2,\cdots,P_n\right)$ is called projective if all its effects satisfy the following relations: $P_i P_j = \delta_{ij} P_i$. Measurement of $\M$ on a quantum state $\rho$ results in a random outcome $i$, distributed according to the Born rule $\mathrm{p}(i|\rho, \M)= \tr\left( \rho M_i \right)$. We will denote the set of all all $\bo$-outcome POVMs by $\P(d,\bo)$. The set $\P(d,\bo)$ is convex \cite{DAriano2005}: for $\M, \N \in \P(d,\bo)$, and  $p\in[0,1]$  we define $p \M + (1-p) \N$ to be an $\bo$-outcome POVM with the $i$-th effect given by $\left[ p \M + (1-p) \N \right]_i=  p M_i + (1-p) N_i$. A convex mixture  $p \M + (1-p) \N$  can be operationally interpreted as a POVM realized by applying, in a given experimental run, measurements $\M,\N$ with probabilities $p$ and $1-p$ respectively. A POVM $\M\in\P(d,n)$ is called extremal if it cannot be decomposed as a nontrivial convex combination of other POVMs.

Another classical operation that can be applied to POVMs is classical post-processing \cite{Buscemi2005,Haapasalo2012}: given a POVM $\M$, we obtain another POVM $\Q\left( \M \right)$ by probabilistically relabeling the outcomes of the measurement $\M$. Effects of $\Q\left( \M \right)$ are given by $\Q(\M)_i = \sum_{j} q_{i|j} M_j$, where $q_{i|j}$ are conditional probabilities, i.e.,  $q_{i|j}\ge0$ and $\sum_{i} q_{i|j} =1$. Lastly, postselection, i.e., the process of disregarding certain outcomes can be used to  implement otherwise inaccessible POVMs. 
We say that a POVM $\L=(L_1,\ldots,L_\bo,L_{\bo+1})$ simulates a POVM $\M=(M_1,\ldots,M_\bo)$ with postselection probability $q$ if $L_i=q M_i$ for $i=1,\ldots,\bo$. 
This nomenclature is motivated by realizing that when we implement $\L$, then, conditioned on getting the first $\bo$ outcomes, we obtain samples from $\M$.  Thus, we can simulate $\M$ by implementing $\L$, and post-selecting on non-observing outcome $\bo+1$.
The probability of successfully doing so is $q$ which means that a single sample of $\M$ is obtained by implementing $\L$  on average $1/q$ number of times. The reader is referred to \cite{Oszmaniec19} for a more detailed discussion of simulation via post-selection.

We will use $\|A\|$ to denote the operator norm of a linear operator $A$, and $[\bo]$ to denote $\bo$-element set $\lbrace1,\ldots\bo\rbrace$. Moreover,  we will use  $\mu_\bo$ to refer to Haar measure on $\bo$-dimensional unitary group $\Unitary(\bo)$, and by $\pr_{U\sim\mu_\bo}(\mathcal{A})$ we will denote  probability of occurrence of an event $\mathcal{A}$ according to this probability measure. Finally, for two positive-valued functions $f(x),g(x)$ we will write $f=\Theta(g)$ if there exist positive constants $c,C>0$ such that $c f(x)<g(x)<C f(x)$, for sufficiently large $x$.

\textit{General POVM simulation protocol---} The following theorem gives a general lower bound on the success probability of simulation of $\bo$-outcome POVMs via measurements with bounded number of outcomes and postselection.  

\newpage

\begin{thm}
 \label{thm:scheme}
Let $\M=(M_1,M_2,\ldots,M_{\bo})$ be an $n$-outcome POVM on $\Complex^d$. Let $\so \leq d$ be a natural number and let $\{X_{\gamma}\}_{\gamma=1}^{\alpha}$ be a partition of $[\bo]$ into  disjoint subsets $X_{\gamma}$ satisfying $|X_{\gamma}|\leq \so-1$. Then, there exists a simulation scheme that uses measurements having at most $\so$ outcomes, classical randomness and post-selection that implements $\M$ with  success probability
    \begin{align}
    \label{eq:qsucc0} 
 \q = \left( \sum_{\gamma=1}^{\alpha} \left\| \sum_{i \in X_\gamma} M_{i} \right\| \right)^{-1}\ .
\end{align}
Furthermore, if $\mathrm{rank}M_i\leq1$, and $m\leq d$,  then measurements realizing the scheme can be implemented by projective measurements in dimension $2d$, i.e., using a single auxiliary qubit.
\end{thm}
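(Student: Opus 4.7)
The plan is to explicitly construct the simulation scheme suggested by Fig.~\ref{fig:general_figure}, namely: build one $\so$-outcome POVM $\NN{\gamma}$ for each block $X_\gamma$ of the partition, randomize over the choice of $\gamma$ with suitable probabilities, and then post-select on the ``non-garbage'' outcomes. The success probability of the overall procedure should come out exactly to \eqref{eq:qsucc0}, and for the second part I will invoke the generalized Naimark dilation of~\cite{Oszmaniec17} that reads the required ancillary dimension off the ranks of the effects.

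First I would set $c_\gamma \defeq \bigl\|\sum_{i \in X_\gamma} M_i\bigr\|$ and define the candidate $\so$-outcome POVM $\NN{\gamma}$ by taking $|X_\gamma|\le \so-1$ ``real'' effects $M_i/c_\gamma$ for $i\in X_\gamma$, and appending a single ``garbage'' effect $G_\gamma \defeq \unity - \sum_{i\in X_\gamma} M_i/c_\gamma$. The definition of $c_\gamma$ as an operator norm immediately ensures $G_\gamma \ge 0$, so $\NN{\gamma}$ is a valid POVM with at most $\so$ outcomes. Next I would choose the randomization weights $p_\gamma \defeq c_\gamma/\bigl(\sum_\delta c_\delta\bigr)$, which are nonnegative and sum to one. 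The combined procedure is thus a POVM $\L$ with $\bo+1$ outcomes whose first $\bo$ effects are, for $i$ in the unique block $X_{\gamma(i)}$ containing $i$,
\begin{equation}
L_i \;=\; p_{\gamma(i)}\,\frac{M_i}{c_{\gamma(i)}} \;=\; q\,M_i\, ,\qquad q \;=\; \Bigl(\sum_{\gamma=1}^{\alpha} c_\gamma\Bigr)^{-1}\, ,
\end{equation}
while $L_{\bo+1} = \sum_\gamma p_\gamma G_\gamma$ collects all garbage outcomes. A short check using $\sum_i M_i = \unity$ gives $L_{\bo+1} = (1-q)\unity$, confirming $\L$ is a POVM and that post-selecting on the first $\bo$ outcomes yields $\M$ with success probability exactly $q$ in \eqref{eq:qsucc0}. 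This establishes the first half of the theorem and is essentially a bookkeeping argument.

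For the projective-implementation statement, I would rely on the generalized Naimark theorem proved in \cite{Oszmaniec17}, which states that an arbitrary POVM on $\C^d$ can be realized as a projective measurement on a space of dimension equal to the sum of the ranks of its effects. Under the hypothesis $\mathrm{rank}\,M_i\le 1$, each of the $|X_\gamma|\le \so-1$ rescaled effects $M_i/c_\gamma$ in $\NN{\gamma}$ is still rank at most one, and the garbage effect $G_\gamma$ has rank at most $d$. The sum of ranks is therefore bounded by $(\so-1)+d \le 2d-1 < 2d$ when $\so\le d$. Hence each $\NN{\gamma}$ admits a Naimark dilation to a projective measurement on $\C^{2d}$, i.e.\ on the original system together with a single ancillary qubit; selecting $\gamma$ classically preserves this property for the whole scheme.

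The steps are largely mechanical, but there are two places where I would be careful. The first is checking that the auxiliary ``reject'' effect remains positive, which is exactly where the operator norm $c_\gamma$ (rather than some smaller scalar) enters and ties the choice of rescaling to the formula for $\q$. The second, and the only nontrivial ingredient, is the invocation of the rank-based Naimark dilation: one needs to confirm that the statement from \cite{Oszmaniec17} applies uniformly to every $\NN{\gamma}$ in our family and that the $|X_\gamma|+1 \le \so$ outcomes really can all be accommodated in the $2d$-dimensional dilation space. Once these are in place, no further estimates are required and the theorem follows.
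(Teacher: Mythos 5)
Your construction is exactly the paper's: the rescaled effects $\lambda_\gamma M_i$ with $\lambda_\gamma = c_\gamma^{-1}$, the mixing weights $\q/\lambda_\gamma$, the verification that the garbage effects sum to $(1-\q)\unity$, and the rank-counting Naimark argument giving $d+\so-1\le 2d$ all coincide with the published proof. The proposal is correct and takes essentially the same approach.
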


\begin{proof} 
In what follows we give an explicit simulation protocol that generalizes  earlier result from \cite{Oszmaniec19,HiddenNonlocality2013} that  concerned the case of simulation via dichotomic measurements ($\so=2$). The idea of the scheme is given in Fig.~\ref{fig:general_figure}.  We start by defining, for every element $X_{\gamma}$ of the partition, auxiliary measurements $\N^{X_{\gamma}}$, each having $\so + 1$ outcomes, whose purpose is to "mimick" measurement $\M$ for outputs belonging to $X_{\gamma}$ and collect other (i.e., not belonging to $X_{\gamma}$) results in the "trash" output labelled by $\bo + 1$. 
Effects of $\N^{X_{\gamma}}$ are defined by $N^{X_{\gamma}}_i = \lambda_\gamma M_i$ for  $i\in X_{\gamma}$, $N^{X_{\gamma}}_i=0$ for $i\in[\bo]\setminus X_\gamma$, and $N^{X_{\gamma}}_{n+1}=\unity -  \lambda_\gamma \sum_{i\in X_\gamma}M_i$, where $\lambda_\gamma=\|\sum_{i\in X_{\gamma}} M_i\|^{-1}$. 

We then define a probability distribution $\lbrace\frac{\q}{\lambda_\gamma}\rbrace_{\gamma=1}^\alpha$. The simulation of $\M$ is realized by considering a convex combination of $\N^{X_{\gamma}}$ according to this distribution: $\L=\sum_{\gamma=1}^\alpha \frac{\q}{\lambda_\gamma} \N^{X_{\gamma}}$. An explicit computation shows that we have $L_i =\q M_i$, for $i\in[n]$ and therefore $\L$ simulates the target measurement $\M$ with success probability $\q$.  

Finally, each of the measurements $\N^{X_{\gamma}}$ comprising $\L$ has at most $|X_{\gamma}|+1$ nonzero effects and therefore they can be implemented with POVMs with at most $\so$ outcomes. From the standard Naimark scheme of implementation of POVMs (c.f. \cite{Peres2002}) we see that the dimension needed to implement a POVM $\N^{X_{\gamma}}$ via projective measurements equals at most the sum of ranks of effects of $\N^{X_{\gamma}}$.
In the case of rank-one $\M$  and $|X_\gamma|\leq\so -1 $ this sum for each $\N^{X_{\gamma}}$ is at most $d + \so -1 \leq 2d$, which completes the proof.
\end{proof}

Crucially, we recall that an arbitrary quantum measurement on $\Complex^d$ can be implemented by a convex combination of rank-one POVMs having at most $d^2$ outcomes followed by suitable post-processing \cite{Davies1976,DAriano2005}. This implies that our protocol facilitates the simulation of \textit{any} POVM on $\Complex^d$ using only a single ancillary qubit -- first by decomposing the target POVM into a convex combination of rank-one $\leq d^2$-outcome measurements, and subsequently applying Theorem~\ref{thm:scheme} to each of them. 

Importantly, the standard Naimark's implementation of a general POVM would require appending an extra system of dimension $d$ (which can be realised by $\log_2 d$ ancillary qubits) and carrying out a global projective measurement. Our simulation protocol greatly reduces this requirement on the dimension cost of implementing $\M$ with the possible  downside being the probabilistic nature of the scheme. 
The success probability $\q$ depends on the choice of the partition $\lbrace X_\gamma\rbrace_{\gamma=1}^\alpha$, and finding the optimal one (for a given bound on the size of $X_\gamma$)  is in general a difficult combinatorial problem. In what follows we collect analytical and numerical results suggesting the following

\begin{conj*}
For arbitrary extremal rank-one POVM $\M=(M_1,\ldots,M_\bo)$ on $\Complex^d$, there exists a partition $\lbrace X_\gamma\rbrace_{\gamma=1}^\alpha$ of $[\bo]$ satisfying $|X_\gamma|\leq d-1$  such that  the corresponding value of success probability $\q$ from Eq. \eqref{eq:qsucc0} is larger than a positive constant independent of $d$.
\end{conj*}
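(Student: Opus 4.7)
Proof proposal. The plan is to reduce the conjecture to an operator-norm partitioning problem and attack it via a combination of weight-thresholding and a Marcus--Spielman--Srivastava (MSS) style paving argument. Writing each rank-one effect as $M_i = w_i \ketbra{\psi_i}{\psi_i}$ with $w_i := \tr M_i$, the condition $\sum_i M_i = \unity$ reads $\sum_{i=1}^{\bo} w_i = d$ and makes $\lbrace v_i := \sqrt{w_i}\ket{\psi_i}\rbrace_{i=1}^{\bo}$ an isotropic frame, $\sum_i v_i v_i^* = \unity$. The target quantity is $T(\lbrace X_\gamma\rbrace) := \sum_\gamma \left\| \sum_{i \in X_\gamma} v_i v_i^* \right\|$. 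The trivial inequalities $\left\|\sum_{i \in X_\gamma} M_i\right\| \leq \min(1, \sum_{i \in X_\gamma} w_i)$ only yield $T = O(\alpha)$ or $T = O(d)$, both of which are too weak because extremality forces $\bo \leq d^2$ and the size constraint forces $\alpha \geq \bo/(d-1) = \Theta(d)$ in the worst case; any proof must therefore exploit directional cancellations of the $\ket{\psi_i}$ within each block.

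First I would fix a threshold $\tau \sim 1/d$ and call $M_i$ \emph{heavy} if $w_i > \tau$ and \emph{light} otherwise. Pack the heavy effects into $O(1)$ dedicated blocks of size $\leq d-1$, each contributing at most $1$ to $T$. Then, after pre-conditioning the light sub-frame $\lbrace v_i : i \text{ light}\rbrace$ (e.g., rescaling by the inverse square root of $\sum_{\text{light}} v_i v_i^*$ to restore the isotropic condition), so that $\|v_i\|^2 \leq \tau \sim 1/d$, one invokes the multi-way version of Marcus--Spielman--Srivastava: an isotropic frame of vectors with $\|v_i\|^2 \leq \epsilon$ admits a partition into $\alpha$ blocks with each block operator norm at most $(1/\sqrt{\alpha} + \sqrt{\epsilon})^2$. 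Taking $\alpha \sim d$ then yields per-block norm $O(1/d)$, so the total contribution from light effects is $\alpha \cdot O(1/d) = O(1)$. Combined with the heavy contribution, $T \leq C$ for an absolute constant $C$, giving $\q \geq 1/C$.

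The size constraint $|X_\gamma| \leq d-1$ is not directly enforced by paving-style existence theorems. To secure it, one can sub-partition any over-size block using recursive application of MSS at the price of a bounded multiplicative blow-up in per-block norm; alternatively, one may restrict the search for a paving partition to the affine subspace of balanced partitions and re-run the interlacing-polynomial argument on this smaller domain. A careful bookkeeping should then preserve the $O(1)$ bound on $T$.

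The hard part will be making the argument uniformly valid over \emph{all} extremal rank-one POVMs. The most delicate regime is $w_i \sim 1/d$ (e.g., SIC-POVMs and nearby extremals), where MSS is essentially tight and leaves no slack for pre-conditioning of the light sub-frame or for post-hoc size rebalancing. Moreover, the pre-conditioning step requires controlling $\|Q^{-1}\|$ where $Q = \unity - \sum_{\text{heavy}} M_i$, which can blow up if the heavy effects collectively saturate $\unity$ in some direction. Overcoming these obstacles will likely require a weighted or sub-isotropic extension of the paving theorem, or a constructive argument exploiting the finite-dimensional structure of extremal rank-one POVMs -- notably the fact that their effects span an at most $d^2$-dimensional affine subspace in the Hermitian cone, which constrains how badly the $\ket{\psi_i}$ can cluster.
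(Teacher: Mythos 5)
First, a point of order: the statement you are proving is stated in the paper as an open \emph{conjecture}. The authors do not prove it; they only supply evidence, namely Theorem \ref{thm:conc_q_lower_upper} (a $6.7\%$ lower bound for Haar-random rank-one POVMs, proved via concentration of measure and $\epsilon$-net bounds on norms of truncated unitaries) and numerics for SIC-, IC- and random POVMs. So there is no paper proof to compare against, and your Marcus--Spielman--Srivastava route is a genuinely different line of attack. It is an attractive one: writing $M_i=v_iv_i^*$ with $\sum_i v_iv_i^*=\unity$ and $\|v_i\|^2=w_i$, if all weights satisfy $w_i\leq\epsilon$ then the $r$-partition form of MSS with $r\sim 1/\epsilon$ gives total norm $r\left(1/\sqrt{r}+\sqrt{\epsilon}\right)^2=\left(1+\sqrt{r\epsilon}\right)^2\leq 4$, i.e.\ $\q\geq 1/4$ --- which exactly matches the $\approx 25\%$ the paper observes numerically for SIC and Haar-random POVMs, where $w_i=\Theta(1/d)$. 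That is a real insight and arguably stronger evidence for the conjecture than anything in the paper for the near-uniform-weight regime.

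However, as a proof of the full conjecture the proposal has a concrete gap in the heavy/light bookkeeping. With threshold $\tau\sim K/d$, the constraint $\sum_i w_i=d$ only limits the number of heavy effects to $d/\tau\sim d^2/K$, not to $O(d)$. Hence the heavy effects do \emph{not} fit into $O(1)$ blocks of size at most $d-1$; they may require $\Theta(d)$ such blocks, and your bound of ``at most $1$ per block'' then contributes $\Theta(d)$ to the sum $T$, destroying the constant. Only effects with $w_i\geq c$ for a fixed constant $c$ (of which there are at most $d/c$) can be disposed of in $O(1)$ blocks. The entire intermediate regime $1/d\ll w_i\ll 1$ --- e.g.\ an extremal POVM with many weights of order $d^{-1/2}$ mixed with many of order $d^{-1}$ --- is handled neither by your heavy bucket nor by a single application of MSS at scale $\epsilon\sim 1/d$. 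Closing it seems to require a dyadic decomposition of the weights with a \emph{sub-isotropic}, size-constrained paving theorem at each scale (so that the per-scale contributions sum geometrically), which is not off-the-shelf. Your other two caveats are also genuine: $Q=\unity-\sum_{\mathrm{heavy}}M_i$ can be singular, so the preconditioning step is not always available; and MSS gives no control of block cardinalities, so the constraint $|X_\gamma|\leq d-1$ must be engineered separately. In short: a promising program, correct and essentially sharp in the uniform-weight case, but not yet a proof of the conjecture for arbitrary extremal rank-one POVMs.
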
  
Let us explore the intriguing conceptual consequences of the validity of this conjecture.
First, consider a general nonadaptive measurement protocol that utilizes some quantum measurement $\M$ on $\Complex^d$. 
Such a protocol consists of $S$ independent measurement rounds of a quantum state $\rho$ resulting in outcomes $i_1,i_2,\ldots,i_S$ distributed according to the probability distribution $p(i|\M,\rho)=\tr(M_i \rho)$. 
This experimental data is then processed to solve a specific problem at hand. 
If we can simulate any \emph{arbitrary} $\M$ (see comment below proof of Theorem~\ref{thm:scheme}) via POVMs that can be implemented using only a single auxiliary qubit with probability $q$,which is independent of the dimension  $d$, then this means that we can, on average,  exactly  reproduce the implementation of the above protocol for $q S$ of the total $S$ rounds. Importantly, we also know in which rounds the simulated protocol was successful, so we know which part of the output data generated by our simulation comes from the target distribution. Crucially, the above considerations are completely oblivious to the figure of merit and the structure of the problem that measurements of $\M$ aim to solve. 

For many quantum information tasks, losing only a constant fraction of the measurement rounds is not prohibitive and hence, assuming the validity of the conjecture, our POVM simulation scheme offers a way to significantly reduce quantum resources needed for said POVM's implementation. Such exemplary  tasks include quantum state tomography \cite{OptTomography}, quantum state discrimination \cite{QuantumStateDisriminationRev}, multi-parameter quantum metrology \cite{Ragy2016,Szcyykulska2016}  or port-based teleportation \cite{Ishizaka2008}, and will be explored by us in future works.

Our simulation protocol and the above conjecture are also relevant from the perspective of POVM simulability \cite{Oszmaniec17,Guerini2017, Oszmaniec19} that attracted a lot of attention recently in the context of resource theories \cite{OszmaniecBiswas2019,Uola2019, Carmeli2019, Takagi2019, Skrzypczyk2019,kuramochi2020compact,ResTheoryMeas}.
Namely, the maximal post-selection probability, $q^{(\so)}(\M)$, with which a target POVM $\M$ on $\Complex^d$ can be simulated using  strategies utilizing randomized POVMs with at most $\so$ outcomes, quantifies how far $\M$ is from the set of $\so$-outcome simulable POVMs in $\Complex^d$, denoted by $\S_\so$. 
Moreover, $q^{(\so)}(\M)$ imposes  bounds on the so-called white noise critical visibility $t^{(\so)}(\M)$ \cite{Oszmaniec17} and the robustness $R^{(\so)}(\M)$  \cite{OszmaniecBiswas2019} against simulation via POVMs from $\S_\so$. Here by critical visibility we mean a parameter $t^{(\so)}\left(\M\right)$ associated with a minimal amount of white noise that ensures that noisy version of $\M$ belongs to subset $\Sm$, namely
\begin{equation}
 t^{(\so)}\left(\M\right) \coloneqq \; \max \; \left\{ \; t \; | \; \Phi_t\left( \M \right) \in \Sm \right\},\\
\end{equation}
where $\Phi_t\left( \M \right)$ is a POVM with effects $\Phi_{t}(M_i):=t M_i + (1-t)\frac{\tr M_i}{d} \unity$. 
By robustness $R^{(\so)}(\M)$ with respect to $\Sm$, we mean the minimal amount of mixing of $\M$ with a POVM from $\Sm$ so that the resulting POVM belongs to $\Sm$, i.e.,
\begin{equation}
 R^{(\so)}\left(\M\right) \coloneqq \; \min \; \left\{ \; s \; | \; \exists \; \mathbf{K} \; \mathrm{s.t.} \; \frac{\M + s \mathbf{K}}{1+s} \in \Sm \right\}.
\end{equation}
Now, the above quantities are bounded with the success probability of our scheme via (see Appendix \ref{app:rela_q_R_V}): 

\begin{equation}
\label{eq:rela_q_R_V}
    q^{(\so)}(\M) \leq t^{(\so)}(\M)\ ,\ R^{(\so)}(\M) \leq \frac{1}{q^{(\so)}(\M)} -1 \ .
\end{equation}
Importantly, we note that the robustness $R^{(\so)}(\M)$ has an appealing operational interpretation: it is also expressible as the maximal relative advantage that $\M$ offers over any POVM in $\Sm$ for a state discrimination task \cite{OszmaniecBiswas2019}:
\begin{equation} 
 \label{eq:Rb_MED_gain_main}
 R^{(\so)}(\M) \ = \max_{\mathcal{E}} \dfrac{\mathrm{P}_{\mathrm{succ}} \left( \mathcal{E},\M \right)}{\underset{ \; \; \N \in \Sm}{\max}\,\mathrm{P}_{\mathrm{succ}} \left( \mathcal{E},\N \right)} \; \; - \; \;1,
\end{equation} 
where $\mathcal{E} = \left\{ (q_i, \sigma_i) \right\}_{i=1}^{\bo}$ is an ensemble of quantum states, and $\psucc{\M}$ is the probability for the minimum error discrimination of the states from $\mathcal{E}$ with $\M$. 
Now, from the second inequality in \eqref{eq:rela_q_R_V} and  the (conjectured) constant lower bound on $q^{(d)}$ we get a surprising conclusion: general POVMs on $\Complex^d$ \emph{do not offer} asymptotically increasing (with $d$) advantage over $d$-outcome simulable measurements for general quantum state discrimination problems.

\textit{Haar Random POVMs---} 
We \sout{would} want to qualitatively understand how $\q$ depends on the total number of outcomes $\bo$, the number of POVM outcomes used in the simulation $\so$, and the dimension $d$. 
To make the problem feasible\sout{,} we turn to study Haar-random POVMs on $\Complex^d$. Quantum measurements comprising this ensemble can be realized by a construction motivated by Naimark's extension theorem: (i) attach to $\Complex^d$ an ancillary system $\Complex^a$  so that  the composite system is  $\bo$-dimensional: $\Complex^d \ot \Complex^a \approx \Complex^\bo$, (ii) apply on this composite system a random unitary $U$ chosen from the  Haar measure $\mu_\bo$ in $\Unitary(\Complex^\bo)$, and (iii) measure the composite system in the computational basis. Effects of this measurement $\M^U$ are given by $M^U_i = \tr_{\Complex^a}\left( \unity \ot \kb{0}{0} \ U^\dag \kb{i}{i} U\right)$, where $\ketbra{0}{0}$ is a fixed state on $\Complex^a$. Haar-random POVMs were introduced first in \cite{HSP} in the context of the hidden subgroup problem and are a special case of a more general family of random POVMs studied recently in \cite{HJMN20}. Measurements $\M^U$ are extremal for almost all $U\in\Unitary(\bo)$. Furthermore, all extremal rank-one POVMs in $\Complex^d$ are of the form $\M^U$ for some $U\in\Unitary(\bo)$, and $\bo\in\lbrace d,d+1,\ldots,d^2\rbrace$. Hence, Haar-random POVMs form an ensemble consisting of extremal non-projective measurements, making them a natural test-bed for studying the performance of  our simulation algorithm.


\begin{thm}[Success probability of the implementation of Haar-random POVMs]
\label{thm:conc_q_lower_upper}
Let $\bo \in \left\{ d,\ldots,d^2 \right\} $, $\so\leq d$. Let $\M^U$ denote a rank-one $\bo$-outcome Haar-random POVM on $\Complex^d$. Let $\q^{(\so)}(\M^U)$ denote the success probability of implementing $\M^U$ via $\so$-outcome measurements as in Eq. \eqref{eq:qsucc0} for the standard partition of $\left[n\right]$, i.e., $X_1=\left\{ 1,\ldots  \so-1 \right\},\ X_2=\left\{ \so,\so+1,\ldots, 2\so-2 \right\}$, etc.
We then have
\begin{equation}
 \label{eq:conc_q_lower_simpler1}
  \underset{U\sim \mu_\bo}{\pr}\left( \q^{(\so)}(\M^U) \geq \Theta(\frac{m}{d})  \right)  \rightarrow 1 ,\text{as } d\rightarrow \infty\ .
\end{equation}
Moreover, let $q^{(\so)}(\M^U)$ be the maximal success probability of implementing $\M^U$ with postselection via convex combination of $\so$-outcome measurements using any simulation protocol. 
We then have 
\begin{equation}
 \label{eq:conc_q_upper_simpler1}
  \underset{U\sim \mu_\bo}{\pr}\left( q^{(\so)}(\M^U) \leq \Theta(\frac{m}{d}\log(d))  \right)  \rightarrow 1 ,\text{as } d\rightarrow \infty\ .
\end{equation}
\end{thm}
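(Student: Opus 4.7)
The plan is to treat the two halves separately: the lower bound~\eqref{eq:conc_q_lower_simpler1} follows by applying Theorem~\ref{thm:scheme} and estimating spectral norms of submatrices of Haar truncations, while the upper bound~\eqref{eq:conc_q_upper_simpler1} requires a structural argument that must hold against every conceivable simulation strategy. Both halves ultimately reduce to concentration-of-measure estimates for truncations of a Haar unitary $U\in\U(\bo)$.

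For the lower bound, Theorem~\ref{thm:scheme} with the standard partition gives $\q^{(\so)}(\M^U)^{-1} \leq \sum_\gamma \|U_{X_\gamma}\|^2$, where each $U_{X_\gamma}$ is the $d\times(\so-1)$ submatrix formed from the first $d$ rows of $U$ restricted to columns in $X_\gamma$. Jiang-type truncation results show that such submatrices are close in total variation to complex Gaussian matrices with entry variance $1/\bo$, and Davidson--Szarek estimates for Gaussian matrices yield $\|U_{X_\gamma}\|^2 = O\!\left((\sqrt{d}+\sqrt{\so-1})^2/\bo\right) = O(d/\bo)$ in the regime $\so\leq d$. Sharp concentration of the spectral norm (via Borell's inequality on the sphere, transferred to the unitary group) provides a deviation probability of order $e^{-c\bo}$; a union bound over the $\alpha\leq\lceil\bo/(\so-1)\rceil\leq\bo$ submatrices then gives $\sum_\gamma \|U_{X_\gamma}\|^2 = O(d/\so)$ with probability $1-o(1)$, which inverts to $\q^{(\so)}(\M^U)\geq\Omega(\so/d)$.

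For the upper bound I would first extract the convex-geometric skeleton of an arbitrary simulation. Writing $\L = \sum_\lambda p_\lambda \Q_\lambda(\N_\lambda)$ with each $\N_\lambda$ an $\so$-outcome POVM, the identity $q\, v_i v_i^\dagger = \L_i$ together with positivity of all terms forces, by the standard range argument for PSD operators summing to a rank-one operator, every contribution to be proportional to $v_i v_i^\dagger$; aggregating yields an effective decomposition $(\Q_\lambda(\N_\lambda))_i = \beta_{\lambda,i}\,v_i v_i^\dagger$ where for each $\lambda$ at most $\so$ of the indices $i\in[\bo]$ contribute non-trivially (since a generic Haar $U$ has pairwise non-parallel $v_i$). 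The constraint that each effect is bounded by $\I$ then gives $\beta_{\lambda,i} \leq 1/\|v_i\|^2$, and summing the relation $\sum_\lambda p_\lambda \beta_{\lambda,i} = q$ over $i\in[\bo]$ while exploiting the per-$\lambda$ sparsity yields the pointwise bound $q \leq (\so/\bo)\max_i 1/\|v_i\|^2$. Since for Haar $U$ each $\|v_i\|^2$ is $\mathrm{Beta}(d,\bo-d)$-distributed with mean $d/\bo$, Chernoff-type tails together with a union bound over the $\bo\leq d^2$ columns deliver $\min_i \|v_i\|^2 \geq c\,d/\bo$ with probability $1-o(1)$, whence $q^{(\so)}(\M^U)\leq \so/(cd)$, which is comfortably within the advertised $\Theta((\so/d)\log d)$ envelope.

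The hardest part will be managing the regime where $\so$, $d$, and $\bo$ are all comparable: in that case the Gaussian approximation of truncations weakens and the operator-norm concentration has to be done directly on $\U(\bo)$ via Gross--Hiai--Petz-type inequalities. The per-effect bound $\beta_{\lambda,i}\leq 1/\|v_i\|^2$ is also somewhat crude, and if one wanted the upper bound to match the lower bound without the logarithmic slack, one would need to refine the pigeonhole step by controlling $\|\sum_{i\in S}v_iv_i^\dagger\|$ uniformly over all subsets $S$ of size $\leq \so$---a $\binom{\bo}{\so}$ union bound which is precisely what forces the $\log d$ factor at the upper end of the range.
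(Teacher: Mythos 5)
Your two halves fare very differently against the paper. For the lower bound \eqref{eq:conc_q_lower_simpler1} your skeleton (bound $\|U_{X_\gamma}\|$ per block, concentrate, union bound over the $\alpha\leq \bo/(\so-1)$ blocks) is the paper's, but the tool you invoke to control $\HE{\|U_{X_\gamma}\|}$ fails precisely in the regime the theorem is really about. Total-variation closeness of a $d\times(\so-1)$ truncation of a Haar unitary to a Gaussian matrix requires the block to have $o(\bo)$ entries, whereas for $\so=d$, $\bo=d^2$ the block has $\approx\bo$ entries, so the Davidson--Szarek bound cannot be imported by TV comparison; you flag this but leave the replacement unspecified, which is where the actual work lies. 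The paper never uses Gaussian approximation: Lemma~\ref{lem:UjavgbigO} bounds $\HE{\|U_X\|}$ directly by an $\epsilon$-net over $\Sx$ combined with the log-Sobolev inequality on $\Sc{\bo-1}$, and then concentrates the $1$-Lipschitz function $U\mapsto\|U_X\|$ using the log-Sobolev constant $6/\bo$ of the Haar measure. Note also that the relevant deviation is $t\sim\sqrt{d/\bo}$, giving a per-block tail $e^{-\Theta(d)}$ rather than your $e^{-c\bo}$; the union bound over at most $d^2$ blocks still closes, but the constants matter for the quoted $6.74\%$.

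For the upper bound \eqref{eq:conc_q_upper_simpler1} you take a genuinely different, and in fact sharper, route. The paper goes through the robustness $R^{(\so)}$, its state-discrimination characterization \eqref{eq:Rb_MED_gain_main}, the ensemble built from the effects, Schur convexity (Lemma~\ref{lem:Dr}), and an entrywise union bound (Lemma~\ref{lem:Nr}); the $\log d$ in the statement is an artifact of that last step. Your structural argument is sound: since $L_i=qM_i$ is rank one and is a sum of PSD contributions, every effect of a constituent $\so$-outcome POVM that is post-processed onto a regular outcome $i$ must be proportional to $M_i$, and because generic Haar effects are pairwise non-parallel each $\N_\lambda$ feeds at most $\so$ regular outcomes with total coefficient at most $1$ per outcome. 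You can even streamline your own final step: summing $\tr L_i=q\tr M_i$ over $i\in[\bo]$ gives $qd=\sum_\lambda p_\lambda\sum_i\beta_{\lambda,i}\leq\so$, i.e.\ the deterministic bound $q^{(\so)}(\M^U)\leq\so/d$ for almost every $U$, with no Beta-tail estimate, no union bound, and no logarithm. That is strictly stronger than the paper's bound and would show the scheme of Theorem~\ref{thm:scheme} is optimal up to constants rather than up to $\log d$; what the paper's robustness route buys instead is a bound that does not rely on the rank-one, pairwise-non-parallel structure of the target effects.
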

The above result shows that when simulating Haar-random POVMs on $\Complex^d$ with $\so$-outcome measurements in our scheme, the success probability scales as $\frac{\so}{d}$. 
Furthermore, Eq. \eqref{eq:conc_q_upper_simpler1} shows the optimality of our method up to a factor logarithmic in $d$. Specifically,  we obtain the following crucial result: when $\so=d$, with overwhelming probability over the choice of random $U\in\Unitary(d^2)$, \sout{the} $\q^{(d)}(\M^U)$ is above $6.74\%$. Below we sketch the proof for Theorem \ref{thm:conc_q_lower_upper}. We provide a complete proof in Appendix~\ref{sec:conc_q_lower}, \sout{together with finite $d$} with expressions for finite $d$, for bounds in Eq. \eqref{eq:conc_q_lower_simpler1} and \eqref{eq:conc_q_upper_simpler1}.

\begin{proof}[Sketch of Proof]

An explicit computation shows that for any subset $X\subset [\bo]$, we have $\| \sum_{i\in X} M^U_i \| = \| U_{X} \|^2$, where $U_X$ is a $d\times |X|$ matrix, obtained by choosing the first $d$ rows of $U$, and then taking from the resulting matrix those columns with indices in $X$. With this we analyze the statistical behaviour of $\q(\M^U)$ in the regime  $d\rightarrow \infty$ using tools from random matrix theory.  Specifically, the proof relies on the phenomenon of concentration of measure \cite{Aubrun2017} on the unitary group $\Unitary(\bo)$ equipped with the Haar measure and distance induced by the Hilbert-Schmidt norm. It shows that as $\bo\longrightarrow \infty$, Lipschitz-continuous random variables on  $U(\bo)$ are with high probability close to their Haar-averages - this is captured by large deviation bounds (also known as concentration inequalities), that upper bound the probability that a random variable take values drastically different form its Haar-average.

In order to prove Eq.~\eqref{eq:conc_q_lower_simpler1}, we choose $\| U_{X} \|$ as the random variable to which we apply the machinery of concentration of
measure. An upper bound to its Haar-average is obtained by performing a discrete optimization over an $\epsilon$-net of an $m-1$-dimensional complex sphere. Since the concentration inequality is true for all subsets $X$ in the partition of $[\bo]$, the union bound shows that $\sum_X\| \sum_{i\in X} M^U_i \|$ also exhibits concentration of measure, which gives Eq. \eqref{eq:conc_q_lower_simpler1}.

In order to prove Eq. \eqref{eq:conc_q_upper_simpler1}, we invoke the inequality in Eq.~\eqref{eq:rela_q_R_V}, and use it to upper bound $q^{(\so)}$ \sout{via} with the robustness $R^{(\so)}(\M^{U})$ of a random POVM $\M^{U}$ \sout{w.r.p.} with respect to  $\so$-outcome simulable POVMs in $\Complex^{d}$. 
Using the interpretation of robustness in the context of state-discrimination (see Eq.~\eqref{eq:Rb_MED_gain_main}), we lower bound it by constructing a specific ensemble of quantum states obtained by rescaling the effects of $\M^{U}$.
In this way, a lower bound on the robustness (hence an upper bound on the success probability) becomes a function of the matrix elements $|U_{ij}|^2$ of the Haar-random unitary $U$. Finally, we prove a concentration of measures inequality for this resulting function, by again invoking the union bound and the  cumulative distribution function of $|U_{ij}|^2$, which was obtained in \cite{Zyczkowski2000}. 
\end{proof}

\textit{Numerical results---}
We tested the performance of our POVM simulation scheme by computing $\q$ for SIC-POVMs \cite{Scott2010symmetric, Appleby2005, Fuchs2017SIC}, IC-POVMs \cite{DAriano2004} and for Haar-random $d^2$-outcome POVMs. 
We focused on simulation strategies via POVMs that can be implemented with a single auxiliary qubit (this corresponds to setting $\so=d$ in Theorem \ref{thm:scheme}). For every dimension, we generated effects of symmetric  POVMs numerically from a single  fiducial pure state via transformations $X_d^i Z_d^j$ , where $i,j\in [0,d-1]$ and $X_d,Z_d$ are $d-$dimensional analogues of Pauli $X$ and $Z$ operators. 
For IC-POVMs we used a one-parameter family of fiducial states $\ket{\psi_\alpha}$ described in Ref~\cite{DAriano2004} for the  specific value $   \alpha = \frac{1}{2}\left(1+i\right)$ (we remark that POVMs originating from other values of $\alpha$ exhibited a similar behaviour). For SIC-POVMs we used fiducial states from a catalogue in Ref~\cite{SICPOVM_catalogue} for $d<100$ and states in higher dimension (up to $d=\noq$), which were provided to us by Markus Grassl in a private correspondence. The construction of random POVMs is described in Appendix~ \ref{sec:numerics}.

Results of our numerical investigation are given in  Fig~\ref{fig:SIC_IC}. For every considered measurement, the success probability was obtained via direct maximization over only $\leq \nop$ random partitions of $[d^2]$.
The graph shows that with increasing dimension, $\q$ approaches $\approx 25\%$ for SIC POVMs and random POVMs, while for IC it is above $\approx 20\%$ even up to $d=\noq$.

\begin{figure}[t]
        \centering
        \includegraphics[width=0.48\textwidth]{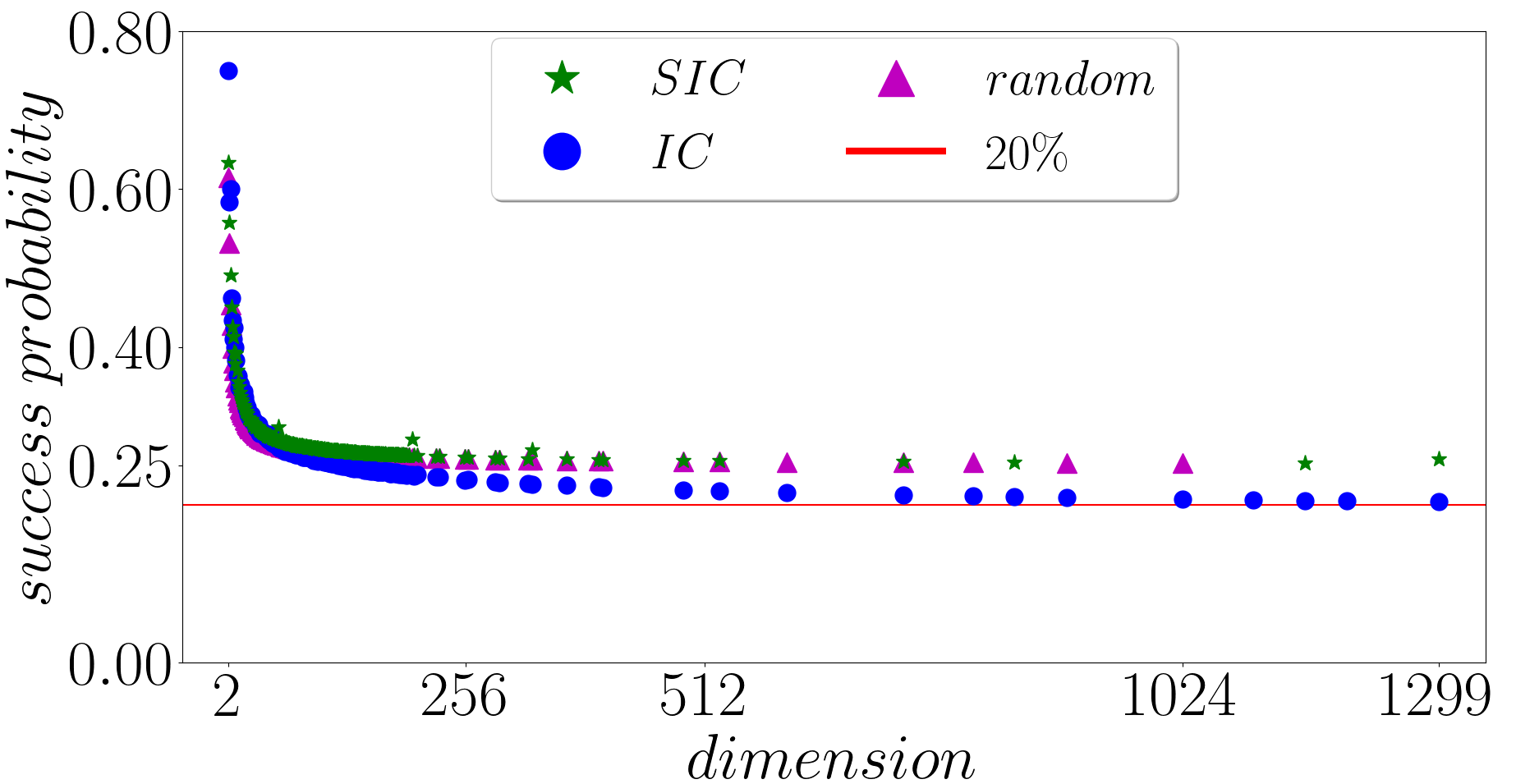}
        \caption{
        Success probability $\q$ as a function of dimension $d$ of the Hilbert space \sout{$d$} for $d^2$-outcome measurements.
     Results are shown for Weyl-Heisenberg SIC-POVMs (green stars), non-symmetric IC-POVMs (blue dots), and random POVMs (magenta triangles) for dimensions upto $\noq$. 
        For each dimension,  we plot the maximum of $\q$ (computed according to the Eq.~\eqref{eq:qsucc0}), which was obtained from random $\leq \nop$ partitions.
        For random POVMs, in each dimension, we generate \sout{from}  $10$ to $500$ random POVMs (lower number for higher dimensions) and plot the minimum $\q$ across them.
        For IC-POVMs, the measurement operators are specified by a single parameter $\alpha$ which we keep at a fixed value across all dimensions (see Appendix~\ref{sec:numerics} for details).
        }
        \label{fig:SIC_IC}
\end{figure}

\textit{Noise Analysis---}
Let us now discuss the effects of experimental imperfections  on practical implementation of our scheme for generic POVMs. The quantum circuits implementing Haar-random POVMs can be considered generic random circuits. 
The simplest noise model often adopted for such circuits (see Ref.~ \cite{Boixo2018}) is a global completely depolarizing channel described by a "visibility" parameter $\eta$. In what follows we assume that this noise is going to affect implementation of circuits used to realize a target POVM $\M$ (either via Naimark's construction or via our method). This noise acts in the following way on effects of  $n$-outcome POVM:  $M_i \rightarrow M^{\eta}_{i} \coloneqq \eta M_i +\rbracket{1-\eta}\frac{\unity}{n}$ (see Section~\ref{sec:dop} of the Appendix for details).

To quantitatively compare noisy and ideal implementation of a POVM we use Total-Variation Distance $\mathrm{d_{TV}}\rbracket{\mathbf{p}\rbracket{\M|\rho},\mathbf{p}\rbracket{\N|\rho}} \coloneqq \frac{1}{2}\sum_{i=1}^{n}|p\rbracket{i|\rho,\M}-p\rbracket{i|\rho,\N}|$ between probability distributions $\mathbf{p}\rbracket{\M|\rho}$ ($\mathbf{p}\rbracket{\N|\rho}$) obtained  when $\rho$ is measured by $\M$ ($\N$).
In particular, we will be interested in the worst-case distance, i.e., TVD \textit{maximized over quantum states} $\rho$, which can be interpreted as measure of statistical distinguishability of $\M$ and $\N$ (without using entanglement \cite{Puchala2018}). This notion of distance is used to benchmark quality of quantum measurements on near-term devices \cite{Maciejewski2020,Bravyi2020,Maciejewski2021}. 

The following result, proven in Section \ref{sec:dop} of the Appendix,  gives a \textit{lower} bound for the average worst-case distance between ideal and noisy implementation of Haar-random POVMs.

\begin{prop}\label{thm:dop_lower}
Let $\M^U$ be a Haar-random $n$-outcome rank-one POVM on $\Complex^d$ and let $\M^{U,\eta}$ be its noisy implementation with effects $\rbracket{\M^{U,\eta}}_i = \eta M^U_i + \rbracket{1-\eta}\frac{\unity}{n}$. We then have 
\begin{equation}
\label{eq:dop_lower}
\HE{ \max_{\rho} \rbracket{ \mathrm{d_{TV}}\rbracket{\pv{\M^U}{\rho},\pv{\M^{U,\eta}}{\rho}}}} \ge  \left(1-\eta\right)c_n\ ,
\end{equation}
where $c_n = \rbracket{1-\frac{1}{n}}^n\approx \frac{1}{\e}$. 
\end{prop}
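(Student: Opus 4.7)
The plan is to extract the factor $(1-\eta)$ trivially, bound the Haar-average of the max over $\rho$ by its value at a single fixed pure state, and then reduce to a one-dimensional integral via unitary invariance.

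First I would observe that since $M_i^{U,\eta}=\eta M_i^U+(1-\eta)\mathbb{I}/n$, the noisy probabilities satisfy $p(i|\rho,\M^{U,\eta})=\eta\,p(i|\rho,\M^U)+(1-\eta)/n$, and therefore
\begin{equation}
\mathrm{d_{TV}}\bigl(\mathbf{p}(\M^U|\rho),\mathbf{p}(\M^{U,\eta}|\rho)\bigr) \;=\; (1-\eta)\,\mathrm{d_{TV}}\bigl(\mathbf{p}(\M^U|\rho),\mathbf{u}\bigr),
\end{equation}
where $\mathbf{u}$ denotes the uniform distribution on $[n]$. So it suffices to show $\HE{\max_\rho\mathrm{d_{TV}}(\mathbf{p}(\M^U|\rho),\mathbf{u})}\ge c_n$. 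I would drop the maximum by plugging in any fixed pure state $\rho_0=\ket{\psi_0}\bra{\psi_0}$; the Naimark construction $M_i^U=(\mathbb{I}\otimes\bra{0})U^{\dagger}\ketbra{i}{i}U(\mathbb{I}\otimes\ket{0})$ then gives $p(i|\rho_0,\M^U)=|\braket{i}{\tilde\psi}|^2$ with $\ket{\tilde\psi}:=U(\ket{\psi_0}\otimes\ket{0})$. By unitary invariance of $\mu_n$, $\ket{\tilde\psi}$ is Haar-distributed on the unit sphere in $\mathbb{C}^n$, so each $p_i$ is marginally $\mathrm{Beta}(1,n-1)$ with density $(n-1)(1-x)^{n-2}$ on $[0,1]$ and mean $1/n$, and the outcomes are exchangeable.

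Finally I would compute the expected TVD in closed form. By linearity and exchangeability,
\begin{equation}
\HE{\mathrm{d_{TV}}\bigl(\mathbf{p}(\M^U|\rho_0),\mathbf{u}\bigr)} \;=\; \frac{n}{2}\,\mathbb{E}\!\left[\,\bigl|X-1/n\bigr|\,\right], \qquad X\sim\mathrm{Beta}(1,n-1).
\end{equation}
Using $\mathbb{E}[|X-\mathbb{E}X|]=2\,\mathbb{E}[(X-\mathbb{E}X)_+]$ (valid since $\mathbb{E}[X]=1/n$) and the substitution $y=1-x$ in the integral $\int_{1/n}^1(x-1/n)(n-1)(1-x)^{n-2}\,dx$, the two elementary antiderivatives combine to $\mathbb{E}[|X-1/n|]=\tfrac{2}{n}(1-1/n)^n$, so multiplying by $n/2$ yields exactly $(1-1/n)^n=c_n$, and re-inserting the factor $1-\eta$ finishes the proof. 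The main ``obstacle'' is really only bookkeeping: correctly identifying the marginal law of $p_i$ under the Naimark dilation (which follows immediately from unitary invariance) and doing the elementary integral carefully so that the closed form $(1-1/n)^n$ emerges exactly rather than as an approximation. No concentration of measure or random-matrix input is required, since the lower bound is on an \emph{expectation} and symmetry lets a single fixed $\rho_0$ do the job.
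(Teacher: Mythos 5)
Your proposal is correct and follows essentially the same route as the paper: extract the factor $(1-\eta)$ exactly, drop the maximum by evaluating at a fixed (basis) state so the outcome probabilities become the squared moduli of a Haar-uniform unit vector, and compute $\frac{n}{2}\,\mathbb{E}|X-1/n|$ for $X$ with density $(n-1)(1-x)^{n-2}$ to get $(1-1/n)^n$. The only cosmetic difference is that the paper plugs in $\rho=\kb{e_i}{e_i}$ and speaks of the column entries $|U_{ij}|^2$ rather than a general fixed pure state pushed through the Naimark isometry, which is the same observation.
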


To make qualitative comparison between our and standard (i.e., based on Naimark's dilation theorem) implementation of POVMs, we use noise model used in Google's recent demonstration of quantum advantage \cite{Google2019}.
Assuming that main source of errors are multiple two-qubit gates, we get that dominating term in visibility is exponentially decaying function: $\eta = \eta\rbracket{r_2,g_2} \approx \exp\rbracket{-r_2 g_2}$, where $r_2$ is two-qubit error rate and  $g_2$ is the number of two-qubit gates needed to construct a given circuit. 
Now recall that for implementation of $d^2$-outcome POVM using Naimark's dilation, one needs to implement circuits on the Hilbert space with doubled number of qubits $2N$ (we assume $d=2^N$), while our post-selection scheme requires  only  a single additional  qubit, hence the target space has only  $N+1$ qubits.
We note that for implementation of generic circuits on $2N$ qubits, the theoretical lower bound \cite{Shende2004minimal} for needed number of CNOT gates is $g^{\mathrm{Naimark}}_2=\Theta\rbracket{4^{2N}}=\Theta\rbracket{16^N}$, while our scheme gives the scaling $g^{\mathrm{post}}_2=\Theta\rbracket{4^{N}}$.

Finally, combining the above considerations with Proposition~\ref{thm:dop_lower}, we get expected worst-case distance between ideal and noisy Naimark implementation of generic $d^2$-outcome measurement is \textit{lower bounded} by $\approx \rbracket{1- \exp\rbracket{-\Theta\rbracket{16^N}}}e^{-1}$, which corresponds to $\eta^{\text{Naimark}}= \exp\rbracket{-\Theta\rbracket{16^N}}$.
We compare this to the quality of probability distribution $\mathbf{p}^{\mathrm{noise}}_{\mathrm{post}}(\M|\rho)$ generated by the noisy version of our simulation scheme  which is based on implementation of projective measurements on $N+1$ (not $2N$) qubits and hence incurring noise with  $\eta^{\text{post}}\approx \exp\rbracket{-\Theta\rbracket{4^N}}$.
In Appendix \ref{sec:dop} we show that postselection step in our scheme does not significantly affect the quality of produced samples by proving that for typical Haar random $\M^U$
\begin{equation}
    \mathrm{d_{TV}}\rbracket{\pv{\M^U}{\rho},\mathbf{p}^{\mathrm{noise}}_{\mathrm{post}}(\M^U|\rho)}\leq C (1-\exp\rbracket{-\Theta\rbracket{4^N}}),
\end{equation} where $C$ is an absolute constant. Therefore, for generic measurements, implementation via our scheme will be affected by much lower noise than in the case of Naimark's. 
We expect that similar behaviour (i.e., amount of noise in our scheme compared to Naimark's dilation) should be exhibited also for more realistic noise models -- the high reduction of the dimension of the Hilbert space is, reasonably, expected to highly reduce the noise.

\paragraph{Discussion and open problems---}

Aside from their practical relevance, our results shred light onto the question whether POVMs are more powerful (in quantum information tasks requiring sampling) than projective measurements.
Indeed, since typical POVMs in $\Complex^d$ can be implemented using $d$-outcome measurements, it suggests (and if our conjecture is true, then it implies) that, if there exists a gap in the relative usefulness (quantified for example via robustness), then it is between projective measurements and $d$-outcome POVMs. 
Moreover, the surprisingly high value of $\q^{(d)}$ will likely have potential applications to nonlocality. 
First, it  significantly limits (due to inequality \eqref{eq:rela_q_R_V})  the amount of local depolarizing noise that can be tolerated in schemes for generation secure quantum randomness using extremal $d^2$-outcome measurements \cite{Acin2016,Augusiak2020}. We also anticipate that our results can be used to construct new local models for entangled quantum states that undergo general POVM measurement (by using techniques similar to those of \cite{Oszmaniec17,Hirsch2017betterlocalhidden}).

We conclude with giving directions for future research. 
First,  naturally, is to verify whether our conjecture is true. The difficulty in proving it comes from the combinatorial nature of the optimization problem in Eq. \eqref{eq:qsucc0} - it is difficult to analytically find the optimal partition of $[n]$ that maximizes $\q$ for a target POVM $\M$. Effects of Haar random POVMs \emph{have similar properties} - in particular, they have (on average) equal operator norms - this symmetry allowed us to study them analytically. However, general POVMs can be highly unbalanced (in the sense of having effects whose operator norms can vary significantly) and suitable strategies need to be devised  to tackle such situations. Second, it is desirable to devise an algorithmic  method which, when given the circuit description of some POVM, returns the circuits needed to implement it with postselection.
Another direction is to identify and quantify the real-time implementation costs of randomisation and post-processing, and how these cost considerations can be
taken into account for suitable modifications of the scheme. Finally, it would be interesting to see if the success probability is connected to other properties of POVMs -- for instance, their entanglement cost \cite{Josza2003}.

\emph{Data availability}
The data obtained in numerical simulations is available from authors upon request.

\emph{Code availability}
The code used to obtain numerical simulations is available from authors upon request.

\emph{Acknowledgements} 
We are sincerely grateful to Markus Grassl for fruitful discussions and for sharing with us the numerical form of fiducial kets of SIC POVMs for high dimensions.
We thank Zbigniew Pucha{\l}a for the discussions at the initial stage of this project and Micha\l\ Horodecki for suggesting potential application of our scheme in PBT. The authors acknowledge the financial support by  TEAM-NET project co-financed by EU within the Smart Growth Operational Programme (contract no.  POIR.04.04.00-00-17C1/18-00). A portion of this work was done while TS was in Fudan university, and TS acknowledges support from  the National Natural Science Foundation of China (Grant No.~11875110) and  Shanghai Municipal Science and Technology Major Project (Grant No.~2019SHZDZX01).

\emph{Author Contributions}
TS had a leading role in proving Theorem 2, Proposition 1 and many auxiliary technical results. 
FBM carried out numerical simulations and proved results concerning noise robustness of POVM implementation methods. 
MO contributed with the main idea of the project, proved Theorem 1 and supervised the other parts project.
All authors equally contributed to writing the manuscript equally.

\emph{Competing interests}
The Authors declare no Competing Financial or Non-Financial Interests.

\bibliographystyle{apsrev4-2}
\bibliography{refs} 
\clearpage

\clearpage

\appendix
\onecolumngrid

\setcounter{secnumdepth}{3}
\setcounter{equation}{0}
\setcounter{figure}{0}
\setcounter{table}{0}
\makeatletter\renewcommand{\theequation}{S\arabic{equation}}
\renewcommand{\thefigure}{S\arabic{figure}} 
\renewcommand{\bibnumfmt}[1]{[S#1]}
\renewcommand{\citenumfont}[1]{#1}

\begin{center}
    \Large{\textbf{Appendix}}
\end{center}

\noindent 
\renewcommand{\thesection}{\Alph{section}}
\numberwithin{equation}{section}
 We  collect  here  technical  results  that  are  used  in  the  main  part  of  the  paper, as well as more detailed descriptions of some of the presented concepts.
In Section~\ref{sec:resources}, we discuss a relation between success probability of our implementation scheme, and a resource-theoretic quantities -- visibility and robustness of POVMs.
In Section~\ref{sec:prelim} we explain concentration of measure for general random variables on probability spaces, especially for the special cases of the unitary group $\Unitary(\bo)$ and the $(\bo-1)$-complex sphere.
The contents of this section should be treated as preliminaries for further sections.
The proofs technical version of Theorem~\ref{thm:conc_q_lower_upper} are provided in Sections~\ref{sec:conc_q_lower}.
In Section~\ref{sec:dop} we describe in more detail the effects that completely depolarizing noise has on the implementation of quantum measurements.
Finally, in Section~\ref{sec:numerics} we provide details of numerical simulations presented in the main text.

For the benefit of the reader, in table below we explain the notation used in the Appendix 

\begin{table}[!h]
\begin{centering}
\label{tab:supp}
\begin{tabular}{|c|c|}
\hline
\textbf{Symbol} & \textbf{Explanation}\tabularnewline
\hline
$d$ & dimension of principal system. \tabularnewline

$\bo$ & Number of outcomes of a target POVM. \tabularnewline

$\so$ & Number of outcomes of POVMs which we simulate target POVM with. \tabularnewline

$\Complex^d$ & Hilbert space of our principal system of study. \tabularnewline

$\Complex^{\bo}$ & Hilbert space of extended system. \tabularnewline



$\rho$, $\sigma$, etc. & General mixed states on quantum system. \tabularnewline


$\M$, $\N$, etc & Quantum measurements on our principal system. 
\tabularnewline

$\q $ & Success probability of simulating a measurement using method specified in Theorem \ref{thm:scheme} of the main text. \tabularnewline 

$R^{(\so)}(\M)$ & Robustness of a POVM $\M$ with respect to $\Sm$.  \tabularnewline

$[\bo]$, $[d]$, etc. & The set $\left\{1,2,\cdots, \bo \right\}$. Similarly for $[d]$ \tabularnewline 

$X,\ Y$, etc. &  subsets of $[\bo]$. \tabularnewline

$\Unitary(\bo)$ & Unitary group of $\bo \times \bo$ complex matrices. \tabularnewline

$U$, $W$, etc. & An $\bo \times \bo$ unitary matrix \tabularnewline

$\Unitaryx$, $\Wx$, etc. & A truncation of $\bo \times \bo$ unitary matrix $U$, occuring at the intersection of rows in $[d]$ and columns in $X$. \tabularnewline 






$\underset{U\sim \mu_\bo}{\pr} \left( \mathcal{E} \right)$ & Probability of some event $\mathcal{E}$. \tabularnewline

$\HE{f }= \intH \ f(U)$ & Integral (expectation value) of function $f$ on unitary group $\Unitary(\bo)$ with respect to the Haar measure. \tabularnewline

$|| \ \ket{\psi} \ || $ & Vector norm of state vector $\ket{\psi}$ \tabularnewline 

$\norm{A}$ & Operator norm of a linear operator $A$. \tabularnewline

$\HS{A}$ & Norm induced by Hilbert-Schmidt inner product on linear operators. \tabularnewline

$\tv (\p{p}, \p{q})$ & Total variational distance between probabilities $\p{p}$ and $\p{q}$. \tabularnewline

\hline
\end{tabular}
\par\end{centering}
\caption{Notation used in the Appendix\sout{supplementary material}}
\end{table}

\section{Relation between $q^{(\so)}$,  and critical visibility $t^{(k)}$, and robustness $R^{(k)}$, }\label{sec:resources}
\label{app:rela_q_R_V}
 Let $\Sm \subset \P(d,n)$ denote all $\so$-outcome simulable POVMs and let $\Phi_t$ denote the depolarising channel  $\Phi_{t}(X):=t X + (1-t)\frac{\tr X}{d} \unity$. 
 Its action naturally extends to POVMs, via action on individual effects: $\Phi_{t}(M_i):=t M_i + (1-t)\frac{\tr M_i}{d} \unity$. For any $\bo$-outcome POVM $\M$ the visibility with respect to $\Sm$ 
\begin{equation}
 \label{eq:visibilityM}
 t^{(\so)}\left(\M\right) \coloneqq \; \max \; \left\{ \; t \; | \; \Phi_t\left( \M \right) \in \Sm \right\}.
\end{equation}
The robustness $R^{\so}(\M)$ of measurement $\M$ with respect to $\Sm$ is defined via
\begin{equation}
 \label{eq:Rb}
 R^{(\so)}\left(\N\right) \coloneqq \; \min \; \left\{ \; s \; | \; \exists \; \mathbf{K} \; \mathrm{s.t.} \; \frac{\N + s \mathbf{K}}{1+s} \in \Sm \right\}.
\end{equation}
Let $q^{(\so)}(\M)$ be the largest success probability with which $\M$ can be simulated via $\so$-outcome POVMs. It follows that  
\begin{equation}
    \L = (q^{(\so)}(\M) M_1, q^{(\so)}(\M) M_2,\ldots, q^{(\so)}(\M) M_\bo, (1-q^{(\so)}(\M)) \unity)
\end{equation}
can be simulated via $\so$-outcome POVMs. It follows that a POVM $\Phi_{q^{(\so)}(\M)}(\M) \in\Sm$. The inequalities claimed in the main text (cf. Eq.\eqref{eq:rela_q_R_V})
\begin{equation}
    q^{(\so)}(\M) \leq t^{(\so)}(\M)\ ,\ R^{(\so)}(\M) \leq \frac{1}{q^{(\so)}(\M)} -1 \ .
\end{equation}
follow directly from definitions of $t^{(\so)}(\M)$ and $R^{(\so)}(\M)$ respectively.

\section{Preliminaries}
\label{sec:prelim}

In this Part we provide some basic theoretical background that will be used in Lemmas  \ref{lem:lip_normUj}, \ref{lem:lip_Pz}, \ref{lem:UjavgbigO}, \ref{lem:Nr}, \ref{lem:Dr}, and Theorems \ref{thm:conc_q_lower_technical} and \ref{thm:conc_q_upper_technical}. In Subsection \ref{subsec:conc_measure}, we introduce the notion of concentration of measure, which will be used extensively for proving the aforementioned lemmas and theorems. Related concepts like Lipshitz constants of functions and log-Sobolev inequalities and log-Sobolev constants are also explained alongside. The metric spaces which we use in this work are the unitary group  $\Unitary(\bo)$ (with metric induced by Hilbert-Schmidt inner product), and the $(\bo-1)$-complex sphere $\Sc{\bo-1}$, with the metric it inherits from $\Complex^{\bo}$. The Haar-measure on $\Unitary(\bo)$ and the uniform measure on $\Sc{\bo-1}$ will be introduced in subsections \ref{subsec:Haar_measure} and \ref{subsec:uniform_measure_Sc} respectively, and the corresponding log-Sobolev constants also mentioned.  

\subsection{Concentration of Measure: Lipshitz constants and log-Sobolev inequalities}
\label{subsec:conc_measure}
We start by recalling notions of Lipshitz constants and log-Sobolev inequalities.
Let $(\mathcal{X},d)$ be a metric space, and let $f:\mathcal{X} \rightarrow \R$ be a real function on $\mathcal{X}$. We say that $f$ is $L$-Lipshitz on $\mathcal{X}$ with respect to the metric $d$, if $f$ satisfies the following condition.
\begin{equation}
\label{eq:Lip_def}
|f(x) - f(y)| \le L \ d(x,y), \; \mathrm{for} \; \mathrm{all} \; x,y \in \mathcal{X}.
\end{equation} \noindent\noindent
Now let $\mu$ be a probability measure on $(\mathcal{X},d)$, and let function $f$ be such that the length of the gradient of $f$ can be defined at any point $x$ in $\mathcal{X}$, namely
\begin{equation}
 \label{eq:gradf}
 \left| \nabla f\right|(x) :=\mathrm{lim} \; \mathrm{sup}_{y \rightarrow x} \; \dfrac{|f(x) - f(y)|}{d(x,y)}.
\end{equation} \noindent \noindent
Then for any such function, the following \textit{concentration inequalities} hold
\begin{equation}
    \label{eq:log_Sob1}
  \int \; d\mu(x) \; \mathrm{exp}\left( \lambda \left( f(x) - \int d \mu(x) f(x) \right) \right)   \le \mathrm{exp} \left( \dfrac{C L^2 \lambda^2}{2}\right), \; \mathrm{for} \; \mathrm{all} \; \lambda \in \R, 
\end{equation} \noindent
\begin{equation}
    \label{eq:log_Sob2}
    \underset{x\sim \mu}{\pr} \left( \, f(x) \; \ge \; \int d\mu(x) f(x) + t \; \right) \le \mathrm{exp} \left( - \dfrac{t^2}{2 C L^2}  \right), \; \mathrm{for} \; t \ge 0,
\end{equation} \noindent \noindent
where $C$ is called the log-Sobolev constant of $\mu$ with respect to the metric $d$ of $\mathcal{X}$. 
We note that the inequality \eqref{eq:log_Sob2} can be derived from \eqref{eq:log_Sob1} (see Theorem 5.39, in \cite{Aubrun2017}).
We refer the reader to \cite{Aubrun2017} for more details on log-Sobolev inequalities. 

\subsection{Haar-measure on \texorpdfstring{$\Unitary(\bo)$}{}}
\label{subsec:Haar_measure}
The group of $\bo \times \bo$ unitary matrices $\Unitary(\bo)$ is endowed with the well known probability measure known as the Haar-measure. 
It follows that for any integrable function $f$ on $U(n)$, its expectation value with respect to the Haar measure is invariant under the following operations
\begin{equation}
 \label{eq:Haarmeasure}
 \intH  f (U)
 = \intH  f (W U)
 = \intH  f (U W)
 = \intH f \left( U^{-1} \right), 
\end{equation} \noindent \noindent
where $W$ is an arbitrary fixed unitary in $U(n)$.
$\Unitary(\bo)$ inherits a metric from the Hilbert-Schmidt inner product on the space of $\bo \times \bo$ complex matrices.
The distance between two unitaries $U,W$ with respect to the Hilbert-Schmidt metric is 
\begin{equation}
 \label{eq:HS_metric}
 \HS{U-W} = \sqrt{ \mathrm{tr} \left( (U-W)^\dag (U-W) \right) }.
\end{equation} \noindent \noindent
The follwing Theorem then gives the log-Sobolev constant for the Haar measure with respect to the Hilbert-Schmidt metric (table 5.4 in \cite{Aubrun2017}).

\begin{thm}\label{thm:log_Sob_const_Haar}\cite{Aubrun2017}
 The log-Sobolev constant for the Haar measure on the unitary group $\Unitary(\bo)$ with the Hilbert-Schmidt metric is $\dfrac{6}{\bo}$.
 \end{thm}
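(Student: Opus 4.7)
The plan is to derive the stated log-Sobolev inequality from a Ricci curvature lower bound via the Bakry--Émery curvature-dimension criterion, applied to a decomposition of $\U(\bo)$ that accommodates its non-semisimple nature. First, I would identify the Haar measure on $\U(\bo)$ with the normalized Riemannian volume measure associated with the bi-invariant metric on $\U(\bo)$ induced by the HS inner product $\langle X, Y\rangle = \mathrm{tr}(X^\dagger Y)$ on the Lie algebra $\mathfrak{u}(\bo)$ of skew-Hermitian matrices. With this identification the Laplace--Beltrami operator has Haar measure as its reversible invariant measure, so the Bakry--Émery $\Gamma_2$ calculus applies and log-Sobolev constants are controlled by lower bounds on the Ricci tensor.

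Next, I would invoke the classical consequence of Bakry--Émery: on a Riemannian manifold $(M, g)$ satisfying $\mathrm{Ric} \geq \rho\, g$ for some $\rho > 0$, the normalized Riemannian volume satisfies the log-Sobolev inequality in the form of Eq.~\eqref{eq:log_Sob1} with constant $C$ of order $1/\rho$. The remaining task is then to bound Ricci from below. For a compact Lie group with a bi-invariant metric, Milnor's formula gives $\mathrm{Ric}(X, X) = \tfrac{1}{4}\sum_i \|[X, e_i]\|_{HS}^2$ for any HS-orthonormal basis $\{e_i\}$ of the Lie algebra, or equivalently $\mathrm{Ric} = -\tfrac{1}{4}B$ where $B$ is the Killing form. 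Applied to $\mathfrak{su}(\bo)$, where $B(X, Y) = 2\bo\,\mathrm{tr}(XY) = -2\bo\,\langle X, Y\rangle_{HS}$, this immediately yields $\mathrm{Ric}(X, X) = \tfrac{\bo}{2}\|X\|_{HS}^2$, so Bakry--Émery gives a log-Sobolev constant of order $1/\bo$ on $\mathrm{SU}(\bo)$.

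The principal obstacle is that $\U(\bo)$ is not semisimple: the one-dimensional center $\mathbb{R}\, i\unity \subset \mathfrak{u}(\bo)$ lies in the kernel of the adjoint action, so Milnor's formula gives Ricci zero there, and no uniform curvature lower bound holds on all of $\U(\bo)$. To circumvent this I would exploit the isometric identification $\U(\bo) \cong (\mathrm{SU}(\bo) \times U(1))/\mathbb{Z}_{\bo}$ induced by the HS-orthogonal decomposition $\mathfrak{u}(\bo) = \mathfrak{su}(\bo) \oplus \mathbb{R}\, i\unity$. Here the HS metric is a product metric, Haar measure factorizes modulo the discrete diagonal action of $\mathbb{Z}_{\bo}$, and, crucially, because $e^{2\pi i k/\bo}\unity$ lies in $\mathrm{SU}(\bo)$ for every integer $k$, the central circle in the quotient has HS circumference only $2\pi/\sqrt{\bo}$ rather than $2\pi\sqrt{\bo}$ (the tangent $i\unity$ has HS norm $\sqrt{\bo}$, but the $\mathbb{Z}_{\bo}$-identification closes the loop after angular displacement $2\pi/\bo$). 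The classical log-Sobolev constant for uniform measure on a circle of circumference $L$ scales as $\Theta(L^2)$, so on this central factor it is of order $1/\bo$.

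To finish, I would use the tensorization property: the log-Sobolev constant of a product measure is the maximum of the factor constants, and the discrete $\mathbb{Z}_{\bo}$-quotient does not increase it. Combining the bound of order $1/\bo$ on $\mathrm{SU}(\bo)$ from Bakry--Émery with the bound of order $1/\bo$ on the central circle gives an overall constant of order $1/\bo$ for $\U(\bo)$, and tracking the sharp numerical factors through Milnor's formula, the Herbst transfer from the differential LSI to the exponential moment form of Eq.~\eqref{eq:log_Sob1}, and the Rothaus-style circle inequality pins down the explicit value $6/\bo$. The hardest step, in my estimation, is this numerical bookkeeping across the three reductions; the $\Theta(1/\bo)$ scaling itself drops out cleanly from the Ricci lower bound on the semisimple part together with the smallness of the central circle produced by the $\mathbb{Z}_{\bo}$-quotient.
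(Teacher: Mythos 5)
You should first note that the paper does not prove this statement at all: Theorem~\ref{thm:log_Sob_const_Haar} is imported verbatim from Table 5.4 of \cite{Aubrun2017}, so the only ``proof'' in the paper is the citation. Your overall strategy --- Bakry--\'Emery on the semisimple part via Milnor's formula and the Killing form (giving $\mathrm{Ric}\geq\frac{\bo}{2}\,g$ on $\mathrm{SU}(\bo)$, hence a log-Sobolev constant $2/\bo$ there), plus a separate treatment of the central direction whose effective length scale is $2\pi/\sqrt{\bo}$ --- is exactly the route taken in that reference (following E.~Meckes), and those individual numerical ingredients are correct.

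However, the step that assembles the two pieces fails as written. In the isometric identification $\U(\bo)\cong(\mathrm{SU}(\bo)\times\U(1))/\mathbb{Z}_\bo$ with the product metric, the $\U(1)$ factor is the \emph{long} circle of circumference $2\pi\sqrt{\bo}$: the tangent $i\unity$ has HS norm $\sqrt{\bo}$ and the curve $\theta\mapsto e^{i\theta}V$ closes only at $\theta=2\pi$. The $\mathbb{Z}_\bo$ action identifies $(V,e^{i\theta})$ with $(e^{-2\pi i/\bo}V,\,e^{i(\theta+2\pi/\bo)})$, which does \emph{not} shorten that circle --- it twists it into the $\mathrm{SU}(\bo)$ factor. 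Consequently tensorization over this product gives $\max\left(2/\bo,\Theta(\bo)\right)=\Theta(\bo)$, and ``quotients do not increase the constant'' then yields only the useless bound $\Theta(\bo)$. The short circle of circumference $2\pi/\sqrt{\bo}$ you invoke is the \emph{base} of the fibration $\det:\U(\bo)\to\U(1)$, not a factor of any product of which $\U(\bo)$ is a $\mathbb{Z}_\bo$-quotient. The standard repair is to abandon the quotient picture in favor of a Lipschitz parametrization: every $U$ writes uniquely as $e^{i\theta}V$ with $\theta\in[0,2\pi/\bo)$ and $V\in\mathrm{SU}(\bo)$; the map $(\theta,V)\mapsto e^{i\theta}V$ pushes $\mathrm{uniform}\times\mathrm{Haar}$ forward to Haar and is Lipschitz from the genuine product of $\mathrm{SU}(\bo)$ with an interval of HS-length $2\pi/\sqrt{\bo}$, so tensorization plus stability of LSI under Lipschitz push-forwards gives $\Theta(1/\bo)$. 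Finally, the specific value $6/\bo$ (as $2/\bo$ from $\mathrm{SU}(\bo)$ plus $4/\bo$ from the interval) is asserted but never actually derived in your write-up, so even after the repair the proposal establishes only the $\Theta(1/\bo)$ scaling rather than the stated constant.
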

 
 \subsection{Uniform measure on \texorpdfstring{$\Sc{\bo-1}$}{}}
 \label{subsec:uniform_measure_Sc}
The complex $(\bo-1)$-sphere $\Sc{\bo-1}$ is defined as
\begin{equation}
 \label{eq:defSc}
 \Sc{\bo-1} = \left\{ \ket{x} \in \Complex^{\bo} \; | \; \bk{x}{x} =1 \right\}.
\end{equation} 
For any $\bo \times \bo$ unitary $U$, the unitary action $\ket{x} \rightarrow U \ket{x}$ is norm-preserving. Thus,  the Haar-measure of $U(n)$ endows a rotationally invariant probability measure on $\Sc{n-1}$ in the following way: fix some arbitrary $\ket{x}$ in $\Sc{\bo-1}$, then for Haar-random $U$, $\ket{z} = U \ket{x}$ is a random variable in $\Sc{\bo-1}$, endowed with a probability measure called the uniform probability measure on $\Sc{\bo-1}$. In particular, one can choose $\ket{x}$ to be a standard basis vector $\ket{e_i}$, which tells us that when $U$ is Haar-random, then it's columns are distributed with the uniform measure on $\Sc{\bo-1}$. The uniform probability measure on $\Sc{\bo-1}$ has a log-Sobolev constant with respect to the usual norm-induced metric on $\Sc{\bo-1}$ (see table 5.4 in \cite {Aubrun2017}; note that $\Sc{\bo-1} \simeq \Sr{2\bo-1}$, which is the $(2\bo-1)$-sphere in $\R^{2\bo}$).
 
 \begin{thm}
 \label{thm:log_Sob_const_Sc}
 \cite{Aubrun2017} The log-Sobolev constant for the uniform measure on the complex $(\bo-1)$-sphere, $\Sc{\bo-1}$ is $\dfrac{1}{2\bo-1}$.
\end{thm}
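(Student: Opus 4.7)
The plan is to deduce the statement from the analogous sharp log-Sobolev inequality on real unit spheres, since the complex sphere $\Sc{\bo-1}$ is isometric to a real sphere of one higher real dimension. As a first step I would set up this identification carefully. The $\R$-linear map $\C^\bo\to\R^{2\bo}$ sending $\ket{x}=\sum_k (a_k+ib_k)\ket{e_k}$ to $(a_1,b_1,\ldots,a_\bo,b_\bo)$ is a real Hilbert-space isometry, because $\bk{x}{x}=\sum_k (a_k^2+b_k^2)$ equals the Euclidean norm-squared on $\R^{2\bo}$. Under this map the unit complex sphere is carried bijectively onto $\Sr{2\bo-1}\subset\R^{2\bo}$, and the chord metric inherited from $\|\cdot\|$ on $\C^\bo$ becomes the Euclidean chord metric.

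Next I have to check that the uniform probability measure on $\Sc{\bo-1}$ defined in Subsection~\ref{subsec:uniform_measure_Sc} (as the push-forward of $\mu_{\bo}$ by $U\mapsto U\ket{x}$) is transported to the uniform measure on $\Sr{2\bo-1}$. By the invariance property \eqref{eq:Haarmeasure} of the Haar measure, this push-forward is invariant under the full unitary action on $\C^{\bo}$, hence under every orthogonal transformation of $\R^{2\bo}$ obtained by restriction. Since the uniform measure on $\Sr{2\bo-1}$ is the unique $\mathrm{O}(2\bo)$-invariant Borel probability measure on the sphere, the two measures agree. Because Lipschitz constants and the gradient modulus \eqref{eq:gradf} depend only on the metric, they are preserved as well, so the log-Sobolev constants of $(\Sc{\bo-1},\|\cdot\|,\mathrm{unif})$ and $(\Sr{2\bo-1},\|\cdot\|_{\R^{2\bo}},\mathrm{unif})$ agree in the convention of \eqref{eq:log_Sob1}--\eqref{eq:log_Sob2}.

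The final step is to apply the classical sharp log-Sobolev inequality for the uniform measure on the real unit sphere $\Sr{N-1}\subset\R^N$, whose log-Sobolev constant is $\tfrac{1}{N-1}$. Specializing to $N=2\bo$ immediately gives $\tfrac{1}{2\bo-1}$, proving the theorem.

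The actual obstacle, which I would not reprove but rather cite from \cite{Aubrun2017} (Table~5.4 there), is the real-sphere inequality itself. The standard derivation uses the Bakry--\'Emery $\Gamma_2$-calculus: the round unit sphere $\Sr{N-1}$ has Ricci curvature $N-2$ and so satisfies the curvature--dimension condition $CD(N-1,\infty)$, which gives the pointwise gradient estimate $\Gamma_2(f,f)\ge (N-1)\,\Gamma(f,f)$; integrating this along the heat semigroup and invoking Herbst's argument yields a log-Sobolev inequality with constant $\tfrac{1}{N-1}$, equivalently the sub-Gaussian concentration bound in \eqref{eq:log_Sob2} with that constant. An alternative route is to project the Gaussian log-Sobolev inequality on $\R^N$ onto the sphere using the fact that a normalized standard Gaussian vector is uniform on $\Sr{N-1}$. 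Either argument produces a sharp constant, saturated asymptotically by linear functions, and combining it with the isometric reduction above completes the proof.
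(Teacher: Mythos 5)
Your proposal is correct and matches the paper's treatment, which offers no proof at all: Theorem~\ref{thm:log_Sob_const_Sc} is simply cited from Table~5.4 of \cite{Aubrun2017} together with the same observation you elaborate, namely that $\Sc{\bo-1}\simeq \Sr{2\bo-1}\subset\R^{2\bo}$ isometrically and measure-preservingly, so the real-sphere constant $\tfrac{1}{N-1}$ with $N=2\bo$ gives $\tfrac{1}{2\bo-1}$. The only quibble is in your Bakry--\'Emery aside: the unit sphere $\Sr{N-1}$ has Ricci curvature $N-2$, so the pointwise bound is $\Gamma_2\ge (N-2)\Gamma$ and the sharp constant $\tfrac{1}{N-1}$ requires the finite-dimensional condition $CD(N-2,N-1)$ rather than $CD(N-1,\infty)$ --- but since you explicitly defer that classical inequality to the citation, this does not affect the argument.
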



Let $\left\{ \ \ket{e_j} \ \right\}_{j=1}^{\bo}$, denote the standard basis for $\Complex^{\bo}$. Each vector $\ket{\psi}$ in $\Sc{\bo-1}$ can be mapped to an $n$-probability vector as follows:
\begin{equation}
 \label{eq:x_p}
\ket{\psi} \, \rightarrow \, \mathrm{\bf{p}}, \; \mathrm{where} \; p_i = |\bk{e_i}{\psi}|^2.
\end{equation}
Imposing the uniform measure on $\Sc{\bo-1}$, converts $p_i$ into a random variable on interval $[0,1]$. Denote $p_i$ by $x$, the probability density of this random variable is given by \cite{Zyczkowski2000}
\begin{equation}
 \label{eq:px}
 p(x) = (n-1) (1-x)^{n-2}, \; \mathrm{where} \; 0 \le x \le 1.
\end{equation} \noindent \noindent
It is easy to see that the expectation value of $x$ is $\frac{1}{n}$. Also, the distribution of $x$ is given by $
 \pr \left( x \ge y \right) \; = \; \left( 1 - y \right)^{n-1}$ and it follows that
\begin{equation}
 \label{eq:mua_approx}
 \pr \left( x \ge y \right) \; \le \; \mathrm{exp} \left( - (n-1) y \right).
\end{equation} \noindent \noindent

\subsection{Haar-random POVMs}
\label{subsec:Haar_random_POVM}

In this subsection we recall the construction of rank-one Haar random POVMs. An $\bo$-outcome, rank-one POVM $\M^U$ on $\Complex^{d}$ can be constructed from Haar-random unitary $U\in\Unitary(\bo)$ using the following steps
\begin{enumerate}
 \item[1.] Extend the principal system $\Complex^{d}$ to a larger system $\Complex^{\bo}$ using an ancillary system, which is prepared in a fixed state $\ket{0}$.
 \item[2.] Rotate the composite system by the unitary $U$ in $\Unitary(\bo)$. 
 \item[3.] Measure the composite system in a computational basis $\cbracket{\ket{e_i}}_{i=1}^{n}$. 
 \end{enumerate}
Let us denote by $\PPP^{U}$ a rank-1 $\bo$-outcome projective measurement on the composite system, whose effects are given by
\begin{equation}
 \label{eq:PU_effects}
 P^U_i  \, =  \, U^\dag \ \kb{e_i}{e_i} \ U, \; \mathrm{for} \; i \in [\bo].
\end{equation} \noindent
Now if the ancillary system is prepared in state $\ketbra{0}{0}$, then performing the above measurement on the composite system, implements on original system $C^{d}$ a rank-1 $\bo$-outcome measurement $\M^{U}$ with effects given by $M^{U}_i = \tr_B \rbracket{\unity \otimes \kb{0}{0} \ U^{\dag} P_i U }$.
Importantly, the matrix elements of $M_i$ can be related to the matrix elements of $U$ via
\begin{equation}
 \label{eq:effects_U_matrix_elements}
 \left( \ M_j \right)_{il} \; = \; U_{ji}^* \ U_{jl}.
\end{equation} \noindent

Finally, when $U$ is distributed according to the Haar measure on $\Unitary(\bo)$, then a POVM $\M^U$ also becomes a random variable. This is called a Haar-random POVM.

\section{Proof of Theorem \ref{thm:conc_q_lower_upper} }
\label{sec:conc_q_lower}

In this section we prove the Theorem~\ref{thm:conc_q_lower_upper} concerning bounds on the success probability of implementation of Haar-random POVMs with postselection. The first three subsections contain auxiliary lemmas needed in the proof of the main result which we provide in Section \ref{app:technicalBounds}. 
From now on, unless stated otherwise, we denote by $X$ a subset of $\sbracket{\bo}$ such that $|X|=\so$, by $U$ a $\bo\times\bo$ unitary matrix, and by $U_{X}$ a truncation of unitary $U$, occurring at the intersection between rows in $[d]$ and columns in $X$.
Furthermore, $\cbracket{\ket{e_i}}_{i}^{\bo}$ is a standard orthonormal basis in $\Complex^{\bo}$ and by $P = \sum_{i=1}^d \kb{e_i}{e_i}$ we denote a projector onto the space of its first $d$ components.

\subsection{Lipshitz constants for functions used in proof of Theorem \ref{thm:conc_q_lower_upper}}
\label{subsec:lower}
We first bound Lipshitz constants for some functions which will be used latter.
\begin{lem}
 \label{lem:lip_normUj}
The function $U \rightarrow \norm{\Unitaryx}$ is $1$-Lipshitz on $\Unitary(\bo)$ with respect to the Hilbert-Schmidt metric.
 \end{lem}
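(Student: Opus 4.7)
The plan is to rewrite $U_X$ as a compression of $U$ by fixed contractions, and then combine the reverse triangle inequality for the operator norm with the standard bound of the operator norm by the Hilbert-Schmidt norm.

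Concretely, I would first note that if $P=\sum_{i=1}^d \ketbra{e_i}{e_i}$ and $Q_X = \sum_{j\in X}\ketbra{e_j}{e_j}$, then embedding the $d\times m$ block $U_X$ into $\C^\bo$ by padding with zeros yields the operator $P U Q_X$. Since this embedding preserves singular values, we have the identity $\|U_X\| = \|P U Q_X\|$, where the norm on the right is the operator norm on $\C^\bo$. The same identity holds for any other unitary $W\in\U(\bo)$.

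Next, I would invoke the reverse triangle inequality for the operator norm:
\begin{equation*}
\bigl|\,\|U_X\| - \|W_X\|\,\bigr| \;=\; \bigl|\,\|P U Q_X\| - \|P W Q_X\|\,\bigr| \;\leq\; \|P(U-W)Q_X\|.
\end{equation*}
Submultiplicativity of the operator norm together with $\|P\|=\|Q_X\|=1$ then gives $\|P(U-W)Q_X\|\leq \|U-W\|$. Finally, the operator norm is bounded above by the Hilbert-Schmidt norm, so $\|U-W\| \leq \|U-W\|_2$, producing the desired inequality
\begin{equation*}
\bigl|\,\|U_X\|-\|W_X\|\,\bigr| \;\leq\; \|U-W\|_2 .
\end{equation*}

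There is essentially no obstacle here: the only care required is in viewing the rectangular submatrix $U_X$ as the compression $P U Q_X$, so that the submultiplicativity argument applies in the ambient space $\C^\bo$ used for the Hilbert-Schmidt metric. All remaining steps are standard norm inequalities.
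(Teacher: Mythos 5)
Your proof is correct and follows essentially the same route as the paper: both rest on the reverse triangle inequality for the operator norm of the truncation followed by standard norm comparisons. The only cosmetic difference is that the paper passes through the Hilbert--Schmidt norm of the submatrix ($\|U_X - W_X\| \le \|U_X - W_X\|_2 \le \|U - W\|_2$), whereas you stay at the operator-norm level via the compression $P(U-W)Q_X$ and submultiplicativity before comparing to the Hilbert--Schmidt norm at the end.
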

\begin{proof}
Let $U,W$ be two $\bo \times \bo$ unitaries, such that $U \neq W$.
Then
\begin{equation}
    \label{eq:norm_Lip}
    \dfrac{\left| \;\norm{\Unitaryx} - \norm{\Wx} \; \right|}{\HS{U-W}} \le \dfrac{\norm{\Unitaryx - \Wx}}{\HS{U-W}} \le \dfrac{\HS{\Unitaryx - \Wx}}{\HS{U-W}} \le 1.
\end{equation} \noindent \noindent
\end{proof}
\begin{lem}
 \label{lem:lip_Pz}
For any $\ket{z}$ in $\Sc{n-1}$, the function $\ket{z} \rightarrow || P \ket{z} ||$ is $1$-Lipshitz. 
\end{lem}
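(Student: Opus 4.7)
The plan is to mimic the argument used for Lemma~\ref{lem:lip_normUj} almost verbatim, replacing the operator norm on a matrix truncation with the vector norm on a projected vector. The key ingredients are the reverse triangle inequality for the vector norm, linearity of $P$, and the fact that $P$ is an orthogonal projector and hence a contraction on $\C^\bo$.

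Concretely, I would fix any two distinct vectors $\ket{z},\ket{w}\in\Sc{\bo-1}$ and bound the ratio
\begin{equation*}
\frac{\bigl|\,\|P\ket{z}\|-\|P\ket{w}\|\,\bigr|}{\|\ket{z}-\ket{w}\|}
\end{equation*}
by $1$. The numerator is controlled by $\|P\ket{z}-P\ket{w}\|$ via the reverse triangle inequality for the Euclidean norm, and by linearity this equals $\|P(\ket{z}-\ket{w})\|$. Since $P=\sum_{i=1}^{d}\kb{e_i}{e_i}$ is an orthogonal projector its operator norm is $1$, giving $\|P(\ket{z}-\ket{w})\|\leq\|\ket{z}-\ket{w}\|$ and the desired bound follows immediately.

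There is no real obstacle here — the entire argument is a three-line chain of inequalities, and the statement is essentially a restatement of the fact that projectors are $1$-Lipschitz maps composed with the $1$-Lipschitz norm functional. The only thing to be slightly careful about is that the metric on $\Sc{\bo-1}$ used in the paper is the one inherited from $\C^\bo$ (as stated in Subsection~\ref{subsec:uniform_measure_Sc}), so no geodesic-distance considerations enter and the chordal estimate above is precisely what the Lipschitz condition~\eqref{eq:Lip_def} requires.
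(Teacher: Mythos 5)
Your argument is correct and is essentially identical to the paper's own proof, which bounds $\bigl|\,\|P\ket{z_1}\|-\|P\ket{z_2}\|\,\bigr|$ by $\|P(\ket{z_1}-\ket{z_2})\|$ via the reverse triangle inequality and then uses $\norm{P}=1$. No differences worth noting.
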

\begin{proof}
 Let $\ket{z_1}, \ket{z_2} \in \Sc{\bo-1}$, such that $\ket{z_1} \neq \ket{z_2}$. Then
\begin{equation}
 \label{eq:PZ_lip}
 \dfrac{| \; \np{z_1} - \np{z_2} \; |}{|| \; \ket{z_1} - \ket{z_2} \; ||} \le \dfrac{|| P \left( \ket{z_1} - \ket{z_2} \right) \; ||}{|| \; \ket{z_1} - \ket{z_2} \; ||} \le  \norm{P} = 1.
\end{equation} \noindent \noindent
\end{proof}

\noindent

\subsection{Upper bound to the Haar-averaged norm of truncations of unitary matrices}
\label{subsec:UjavgbigO}
The following auxiliary  results allow us to upper bound expected value of the operator norm of truncations of Haar random unitaries.

\begin{lem}
\label{lem:avgUj_discrete}
Let $\Sx \subset \Sc{\bo-1}$ be defined as 
\begin{equation}
 \label{eq:S_j}
\Sx = \left\{ \ket{a} \in \Sc{\bo-1} \; | \; \bk{e_i}{a} = 0, \; \forall \; i \notin X \right\},
 \end{equation} \noindent \noindent
Let $\Ej$ be an $\epsilon$-net for $\Sx$.
Then $ \bra{e_i} P \ket{x}=0$ for $i\ge d+1$ for all $\ket{x} \in \Complex^\bo$, and we have
\begin{equation}
\label{eq:mean_of_Uj_op_norm_vs_mean_of_ReyUx}
 \norm{\Unitaryx}  \le \dfrac{1}{1 -  \epsilon} \left( \underset{ \sk{x} \in \Ej}{\max} \; \nup{x} \right).
\end{equation} \noindent \noindent
\end{lem}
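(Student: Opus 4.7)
The first claim, $\bra{e_i}P\ket{x}=0$ for $i\ge d+1$, is immediate from the definition $P=\sum_{i=1}^{d}\kb{e_i}{e_i}$, so the substance of the lemma is the operator-norm bound. My plan proceeds in two steps: first, I will identify $\norm{\Ux}$ with the supremum of $\nup{a}$ over unit vectors $\ket{a}\in\Sx$; second, I will pass from this supremum to the maximum over the $\epsilon$-net $\Ej$ through a standard covering argument.

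For the first step, I will unpack the definition of $\Ux$. Writing a unit vector $\ket{a}=\sum_{i\in X}a_i\ket{e_i}\in\Sx$ and identifying it with $a\in\C^{|X|}$, the $j$-th component of $U\ket{a}$ for $j\in[d]$ equals $\sum_{i\in X}U_{ji}a_i=(U_Xa)_j$. Hence $\nup{a}^2=\sum_{j=1}^{d}|(U_Xa)_j|^2$ is the squared Euclidean norm of the vector $U_Xa\in\C^{d}$, and taking the supremum over unit $\ket{a}\in\Sx$ gives $\norm{\Ux}=\sup_{\ket{a}\in\Sx}\nup{a}$, which is attained at some $\ket{a^*}\in\Sx$ by compactness.

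For the second step, I will use the $\epsilon$-net property. Choose $\ket{x}\in\Ej$ with $\|\ket{a^*}-\ket{x}\|\le\epsilon$; since $\Ej\subseteq\Sx$ by convention, the difference $\ket{\delta}:=\ket{a^*}-\ket{x}$ is itself supported on the coordinates indexed by $X$. Because the restriction of $PU$ to that subspace has operator norm exactly $\norm{\Ux}$, one obtains $\nup{\delta}\le\norm{\Ux}\cdot\|\ket{\delta}\|\le\epsilon\norm{\Ux}$. The triangle inequality now yields
\[
\norm{\Ux}=\nup{a^*}\le\nup{x}+\nup{\delta}\le\max_{\ket{y}\in\Ej}\nup{y}+\epsilon\norm{\Ux},
\]
and rearranging produces the claimed bound. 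The argument is essentially routine; the only subtle point is keeping $\ket{\delta}$ supported on $X$, which is what allows the improved estimate $\nup{\delta}\le\epsilon\norm{\Ux}$ rather than the weaker $\nup{\delta}\le\epsilon$ that would not produce the factor $1/(1-\epsilon)$.
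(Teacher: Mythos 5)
Your proof is correct and follows essentially the same route as the paper's: identify $\norm{\Ux}$ with the maximum of $\nup{a}$ over $\Sx$ (the paper cites the SVD, you unpack the matrix entries, but the content is identical), then apply the standard $\epsilon$-net covering argument and rearrange to get the $1/(1-\epsilon)$ factor. Your explicit remark that the difference vector stays supported on $X$ — which is what justifies $\nup{\delta}\le\epsilon\norm{\Ux}$ — is the step the paper leaves implicit, so if anything your write-up is slightly more careful.
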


\begin{proof}
From the singular value decomposition of $\Unitaryx$, we get that 
\begin{equation}
 \label{eq:op_norm_Uj_using_SVD2}
\norm{\Unitaryx} = \underset{\sk{a} \in \Sx}{\max} \, \nup{a} =  \nup{\tilde{a}}, 
\end{equation} \noindent \noindent
where $\ket{\tilde{a}} \in \Sx$ is the (or is a) vector at which the maximization in equation \eqref{eq:op_norm_Uj_using_SVD2} is attained.

Now to discretize the optimization in equation \eqref{eq:op_norm_Uj_using_SVD2}, we optimize over $\Ej$ instead, and we note that then there exists $\ket{\tilde{x}} \in \Ej$ such that $|| \; \ket{\tilde{x}} - \ket{\tilde{a}} \; ||$ $\le \epsilon$. Hence we get that $\nup{\tilde{a}} \le \nup{\tilde{x}} + \epsilon \  \norm{\Unitaryx} $, which gives us 
\begin{align}
 \label{eq:using_epsilon_nets}
    \norm{\Unitaryx} \le \dfrac{1}{1 - \epsilon} \nup{\tilde{x}} \le \, \dfrac{1}{1-\epsilon} \ \left( \underset{\sk{x} \in \Ej}{\max} \; \nup{x} \right),
\end{align}
 for all $0 < \epsilon < 1$. 
\end{proof}

\begin{lem}
\label{lem:UjavgbigO}
We have the following upper bound for expected value of the norm of truncation of the unitary matrix
\begin{equation}
 \label{eq:avgUibigO}
 \HE{ \, \norm{\Unitaryx} \, } \; \le c \ \left(1+\sqrt{\frac{2\m}{d}}\right) \; \sqrt{\dfrac{d}{\bo}}, \; \mathrm{where} \; c \approx  1.92.
\end{equation} \noindent \noindent
Additionally, when $\m=d-1$, 
\begin{equation}
 \label{eq:avgUibigO_mequalsd}
 \HE{ \, \norm{\Unitaryx} \, } \; \le c \ \sqrt{\dfrac{d-1}{\bo}}, \; \mathrm{where} \; c \approx  3.86\ . 
\end{equation} \noindent \noindent
\end{lem}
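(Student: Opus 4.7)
The plan is to combine the discretization bound of Lemma~\ref{lem:avgUj_discrete} with concentration of measure on the complex sphere and a standard tail-to-expectation conversion. Fix an $\epsilon$-net $\Ej\subset\Sx$; the standard volumetric estimate gives $|\Ej|\le (1+2/\epsilon)^{2\so}$ (since $\Sx\simeq \Sc{\so-1}$ has real dimension $2\so-1$), and Lemma~\ref{lem:avgUj_discrete} yields $\HE{\norm{\Ux}}\le(1-\epsilon)^{-1}\HE{\max_{\sk{x}\in\Ej}\nup{x}}$. For a \emph{fixed} $\ket{x}\in\Sc{\bo-1}$ the vector $U\ket{x}$ is uniformly distributed on $\Sc{\bo-1}$ (Subsection~\ref{subsec:uniform_measure_Sc}), and Haar invariance gives $\HE{|\bra{e_i}U\ket{x}|^2}=1/\bo$ for each $i\in[\bo]$, so by Jensen's inequality $\HE{\nup{x}}\le\sqrt{\HE{\nup{x}^2}}=\sqrt{d/\bo}$.

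Next I would feed the map $\ket{z}\mapsto\np{z}$ (which is $1$-Lipschitz by Lemma~\ref{lem:lip_Pz}) into the concentration inequality~\eqref{eq:log_Sob2} on $\Sc{\bo-1}$, whose log-Sobolev constant is $(2\bo-1)^{-1}$ by Theorem~\ref{thm:log_Sob_const_Sc}. This gives the sub-Gaussian tail
\[
\underset{U\sim\mu_\bo}{\pr}\rbracket{\nup{x}\ge\sqrt{d/\bo}+t}\le\exp\rbracket{-\tfrac{(2\bo-1)t^2}{2}},
\]
and a union bound over $\Ej$ gives the same estimate for $\max_{\sk{x}\in\Ej}\nup{x}$ multiplied by $(1+2/\epsilon)^{2\so}$. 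I would then convert this into an expectation estimate via the layer-cake identity $\HE{Y}\le a+\int_a^{\infty}\pr(Y>s)\,ds$, taking $a=\sqrt{d/\bo}+r_0$ with $r_0=2\sqrt{\so\log(1+2/\epsilon)/(2\bo-1)}$ chosen so that the exponential factor in the union bound exactly cancels the net-size factor at $r_0$. The residual Gaussian-tail integral contributes only a lower-order term of order $1/\sqrt{\bo\so}$, yielding
\[
\HE{\max_{\sk{x}\in\Ej}\nup{x}}\le\sqrt{d/\bo}+\sqrt{\tfrac{2\so\log(1+2/\epsilon)}{\bo}}+O\rbracket{\tfrac{1}{\sqrt{\bo\so}}}.
\]

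Multiplying by the prefactor $(1-\epsilon)^{-1}$ produces a bound of the form $A(\epsilon)\sqrt{d/\bo}+B(\epsilon)\sqrt{\so/\bo}$, which I would recast in the target shape $c(\epsilon)\rbracket{1+\sqrt{2\so/d}}\sqrt{d/\bo}$ by taking $c(\epsilon)=\max\rbracket{A(\epsilon),\,B(\epsilon)/\sqrt 2}$, and then minimize over $\epsilon\in(0,1)$ to reach the claimed constant $c\approx 1.92$ of~\eqref{eq:avgUibigO}. For the sharpened bound~\eqref{eq:avgUibigO_mequalsd} in the special case $\so=d-1$, the two contributions $\sqrt{d/\bo}$ and $\sqrt{\so/\bo}$ differ only by a factor $\sqrt{d/(d-1)}$, so one can skip the factorization through $1+\sqrt{2\so/d}$ and combine them directly into a single $c\sqrt{(d-1)/\bo}$ expression, paying the larger constant $c\approx 3.86$. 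The main difficulty is not conceptual but purely quantitative: carefully tracking constants through the net-size, Lipschitz, and Gaussian-tail steps precisely enough to match the stated numerical values; a looser analysis yields the same $\Theta\rbracket{\sqrt{d/\bo}}$ scaling but with a larger prefactor, and a refinement via a chaining/Dudley-type argument may be needed to sharpen the constant to exactly $1.92$.
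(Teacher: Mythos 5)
Your proposal is correct and follows essentially the same route as the paper's proof: the $\epsilon$-net discretization of Lemma~\ref{lem:avgUj_discrete} with $|\Ej|\le(1+2/\epsilon)^{2\so}$, the mean bound $\HE{\nup{x}}\le\sqrt{d/\bo}$ via Jensen, concentration of the $1$-Lipschitz map $\ket{z}\mapsto\np{z}$ on $\Sc{\bo-1}$ with log-Sobolev constant $(2\bo-1)^{-1}$, a union bound over the net, and a final optimization over $\epsilon$. The only immaterial difference is that you extract the expected maximum from the tail bound~\eqref{eq:log_Sob2} via a layer-cake integral, whereas the paper uses the moment-generating-function form~\eqref{eq:log_Sob1}, sums over the net, applies Jensen, and optimizes the Chernoff parameter $\lambda$; both give $\sqrt{4\so\log(1+2/\epsilon)/(2\bo-1)}+\sqrt{d/\bo}$ (yours with an extra lower-order residual), and no chaining refinement is needed — the single-scale net argument with $\epsilon\approx0.15$ already yields $c\approx1.92$.
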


\begin{rem}
 \label{rem1}
The proof of Lemma \ref{lem:UjavgbigO} is inspired by the proof of equation (18) and  Theorem 7 in \cite{RRZK16} (please see Section 2 of the appendix in \cite{RRZK16}). 
In Remark \ref{rem2} below we briefly explain the differences between the proof presented here and the proof in \cite{RRZK16}.
\end{rem}

\begin{proof}
Let $\ket{z} \in \Sc{\bo-1}$, and define the function $\ket{z} \, \rightarrow \, \np{z}$. 
This function is $1$-Lipshitz on $\Sc{n-1}$ (Lemma \ref{lem:lip_Pz}). 
Define $\Sx$ as in equation \eqref{eq:S_j}. 
Now fix some $\ket{x} \in \Sx$. Let $U \in \Unitary(\bo)$ be Haar-random, and let $\ket{z} = U \ket{x}$.
Then $\ket{z}$ is uniformly distributed on $ S^{\bo-1}_{\Complex}$ (see Subsection \ref{subsec:uniform_measure_Sc}
).
Thus the function $\ket{z} \rightarrow \np{z}$ satisfies the following log-Sobolev inequality with a constant $C=\frac{1}{2\bo-1}$, with respect to the uniform measure on $\Sc{\bo-1}$ (
see Subsection \ref{subsec:uniform_measure_Sc}
)
\begin{align}
 \label{eq:log-Sobolev3_1}
 \intH \mathrm{exp}\left(  \ \lambda \left( \ \nup{x}  - A \right)  \     \right)  \le \exp\left( \frac{\lambda^2}{2 (2\bo-1)}\right), \; \forall \; \lambda \in \R.
\end{align} \noindent
where $A \coloneqq \HE{ \ \nup{x} \ }$. Since $\exp\left(- \lambda \ A\right)$ is independent of the integrating variable, we get
\begin{align}
 \label{eq:log-Sobolev3_2}
 \intH \mathrm{exp}\left( \ \lambda \  \nup{x}   \  \right)  \le \mathrm{exp}\left( \frac{\lambda^2}{2 (2\bo-1) } + \lambda A\right).
\end{align}
First we prove that $A \le \sqrt{\frac{d}{\bo}}$.
Using the well-known result $\HE{|U_{ij}|^2}=\frac{1}{n}$, one obtains
\begin{equation}
 \label{eq:Haar_avg_norm_sq_PUketx}
 \HE{\norm{PU \ket{x}}^2} = \sum_{i=1}^n \HE{|U_{ij}|^2} = \dfrac{d}{n},
\end{equation} \noindent \noindent
where we chose $\ket{x} = \ket{e_j}$ for some $j \in X$. Now note that $\HE{\norm{PU \ket{x}}^2} \ge \HE{ \  \norm{PU \ket{x}} \ }^2$.
Hence we get 
\begin{align}
 \label{eq:log-Sobolev3}
 \intH \mathrm{exp}\left( \ \lambda \  \nup{x}   \   \right)  \le \mathrm{exp}\left( \frac{\lambda^2}{2 (2\bo-1)}  +  \lambda \sqrt{\frac{d}{\bo}}\right).
\end{align}
Now let $\Ej$ be an $\epsilon$-net for $\Sx$.
Then we sum the inequality \eqref{eq:log-Sobolev3} over all $\ket{x} \in \Ej$, and we get
\begin{equation}
 \label{eq:log-Sobolev5}
 \sum_{\sk{x} \in \Ej} \intH \; \mathrm{exp} \left(  \ \lambda \  \nup{x}   \  \right)  \;  \le |\Ej| \;  \mathrm{exp}\left( \frac{\lambda^2 }{2 (2\bo-1)}  +  \lambda \sqrt{\frac{d}{\bo}} \right).
\end{equation} \noindent 
For each $U \in \Unitary(\bo)$ there is some $\ket{x_{\scriptscriptstyle{U}}} \in \Ej$, such that 
\begin{equation}
\label{eq:maxReyiUxj}
 \nup{x_{\scriptscriptstyle{U}}} = \underset{  \sk{x} \in \Ej }{\max} \; \nup{x}.
\end{equation} 
It is not difficult to see that $U \rightarrow \nup{x_{\scriptscriptstyle{U}}}$ is a continuous function, which implies that $\exp \left(\lambda \ \nup{x_{\scriptscriptstyle{U}}} \ \right)$ is integrable on $\Unitary(\bo)$. Thus we get
\begin{align}
\label{eq:mean_of_complicated_function}
\intH \mathrm{exp} \left( \ \lambda \  \nup{x_{\scriptscriptstyle{U}}}   \ \right) 
\le  &  \; \sum_{\sk{x} \in \Ej  }      \HE{ \ \mathrm{exp} \left( \ \lambda \  \nup{x}  \ \right) \ } \notag \\
\le  & \; |\Ej| \; \mathrm{exp}\left( \frac{\lambda^2  }{2\left(2\bo-1 \right)} +  \lambda \sqrt{\frac{d}{\bo}} \right).
 \end{align}
Since the exponential function is convex, Jensen's inequality can be applied in equation \eqref{eq:mean_of_complicated_function}, which gives
\begin{equation}
 \label{eq:exp_of_mean_1}
 \mathrm{exp} \; \left( \lambda \intH \;  \nup{x_{\scriptscriptstyle{U}}} \right) \; \le \; |\Ej| \; \mathrm{exp}\left( \frac{\lambda^2 }{2\left(2\bo-1 \right)} +  \lambda \sqrt{\frac{d}{\bo}} \right) \ .
 \end{equation} \noindent \noindent
 Now taking the (natural) logarithm (and assuming that $\lambda > 0 $) we get
 \begin{equation}
  \label{eq:exp_of_mean_2} 
  \intH \;  \nup{x_{\scriptscriptstyle{U}}}   \le \frac{1}{\lambda } \; \left( \log|\Ej| + \frac{\lambda^2 }{2\left(2\bo-1 \right)} +  \lambda \sqrt{\frac{d}{\bo}}  \right).
\end{equation} \noindent \noindent
Since the inequality \eqref{eq:exp_of_mean_2} is valid for all $\lambda > 0$, we directly minimize the RHS over $\lambda$, and we get
\begin{align}
 \label{eq:Uj_norm_mean1}
 \HE{ \ \underset{\sk{x} \in \Ej}{\max} \, \nup{x} \ \  } \le \sqrt{\dfrac{2 \ \log |\Ej| \ }{2\bo-1}} +  \sqrt{\frac{d}{\bo}},
\end{align}
which is obtained at the value $\lambda = \sqrt{2 (2n-1) \ \log|\Ej|}$. Note that we have used equation \eqref{eq:maxReyiUxj} in the LHS of equation \eqref{eq:Uj_norm_mean1}. 

There's a well-known theorem (see, e.g., \cite{Szarek1998, Aubrun2017})
that an $\epsilon$-net for $\Sx$ has at most $\left( 1 + 2/\epsilon\right)^{2\m}$ points. 
This gives us an upper bound for $| \Ej|$, which inserted into RHS of \eqref{eq:Uj_norm_mean1} gives
\begin{align}
 \label{eq:Uj_norm_mean2}
 \HE{ \underset{\sk{x} \in \Ej}{\max} \; \nup{x} \;  } \le \sqrt{\dfrac{4 \ \m \ \log \ (1+2/\epsilon)}{2\bo-1}} +   \sqrt{\frac{d}{\bo}} \ .
\end{align}
Finally, using the inequality \eqref{eq:mean_of_Uj_op_norm_vs_mean_of_ReyUx} with the inequality \eqref{eq:Uj_norm_mean2} we get 
\begin{equation}
 \label{eq:mean_of_Uj_norm1_1}
 \norm{\Unitaryx}  \le \frac{1}{1 - \epsilon} \left( \sqrt{ \dfrac{4 \ \m \  \log \ (1 + 2/\epsilon) \ }{2\bo-1} } + \sqrt{\dfrac{d}{\bo}}\right), \; \mathrm{for} \; \epsilon \in (0,1) ,
\end{equation} \noindent
which is valid for any $\epsilon \in \rbracket{0,1}$.
Now recall that in our scheme we are interested in case when $\m \le d-1$, which allows to rewrite the above inequality as
\begin{equation}
 \label{eq:mean_of_Uj_norm1}
 \norm{\Unitaryx} \,  \le \, \frac{\sqrt{ \ \log \ (1 + 2/\epsilon) \ } \ }{1 - \epsilon} \, \left( \  \ 1 + \sqrt{  \dfrac{2 \m }{d} \ }  \ \right) \  \sqrt{ \ \dfrac{d}{\bo} \  }, \;  \mathrm{for} \; \epsilon \in (0,1),
\end{equation} \noindent \noindent
where we have used the fact that $1 < \log(1+2/\epsilon)$ 
for any $\epsilon \in (0,1)$ and we assume that $n$ is large.
With this approximation it is possible to perform minimization over $\varepsilon$, which gives us the inequality \eqref{eq:avgUibigO}.
Note that the result of minimization will generally depend on the relative values of $d$ and $\m$, and so for special case $\m=d-1$ we get inequality \eqref{eq:avgUibigO_mequalsd}.
\end{proof}

\begin{rem}
\label{rem2}
There are two differences between the proof that we gave above, and the proof for equation (18) in Theorem 7 of \cite{RRZK16}. Firstly, the goal of Lemma \ref{lem:UjavgbigO} is to find an upper bound to $\HE{ \ \norm{\Unitaryx} \ }$, while in \cite{RRZK16}, the upper bound being sought is for $\HE{ \ \underset{X,Y}{\max} \, \norm{U_{\scriptscriptstyle X,Y}} \ } $,  where $X, Y \in [\bo]$, such that $|X| = d$ and $|Y|=\m$, and $U_{\scriptscriptstyle{X,Y}}$ is the $d \times \m$ truncation of $U$ lying at the intersection between rows in $X$ and columns in $Y$. For this purpose, the optimization in \cite{RRZK16} is over an $\epsilon$-net whose cardinality is $\binom{\bo}{d}\binom{\bo}{\m} \left( 1 + \frac{2}{\epsilon} \right)^{2(d + \m)}$. The second difference is that we use the equation \eqref{eq:mean_of_Uj_op_norm_vs_mean_of_ReyUx} for the optimization, whereas in \cite{RRZK16}, they used $\norm{U_{\scriptscriptstyle{X,Y}}} \ = \ \underset{\sk{x},\sk{y}}{\max} \; \mathrm{Re} \ \bra{x} U \ket{y} $, where $\ket{x} \in E_{\scriptscriptstyle{X}}$, which is an $\epsilon$-net for $\Sx$, and $\ket{y} \in E_{\scriptscriptstyle{Y}}$, which is an epsilon-net for $S_{\scriptscriptstyle{Y}}$. Our reason for choosing equation \eqref{eq:mean_of_Uj_op_norm_vs_mean_of_ReyUx} is that it allows us to obtain a lower upper bound in inequality \eqref{eq:avgUibigO} and \eqref{eq:avgUibigO_mequalsd}. This is important because this upper bound is closely associated with the success probability, as can be seen in the proof of Theorem \ref{thm:conc_q_lower_technical}.
\end{rem}

\subsection{Lemmas needed for the proof of upper bound on $q^{(\so)}(\M^U)$}

\begin{lem} 
 \label{lem:Nr}
Let $\all{j}{U} := \sum_{i=1}^d |U_{ij}|^2$ for $j \in [\bo]$. Then we have 

\begin{equation}
 \label{eq:conc_measure_g}
 \underset{U\sim \mu_\bo}{\pr}\left( \max_{X\subset [\bo], |X|\leq \so}\sum_{j \in X} \all{j}{U} \; \ge \; \dfrac{2d\so\log \bo}{\bo} \,  \left( 1+ \epsilon \right) \, \right) \; \le \;  \; \dfrac{\e \; d}{\bo^{1+2\epsilon}}, \; \mathrm{where} \; \epsilon \in \left( \dfrac{1}{2 \log n}, \dfrac{\bo}{2 \log \bo} -1 \right).
\end{equation} \noindent
\end{lem}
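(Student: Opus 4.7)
}

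The plan is to reduce the event concerning a maximum over subsets all the way down to an event concerning a single matrix entry of $U$ via two successive pigeon-hole arguments, and then invoke the explicit tail bound for one coordinate of a Haar-random unit vector recorded in Eq.~\eqref{eq:mua_approx}. The final constant $\e$ in the bound will come from a careful accounting of the $(n-1)/n$ factor in the exponent, which is precisely what the upper restriction on $\epsilon$ controls.

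First I would note that, since each $X$ satisfies $|X|\le m$, the event $\sum_{j\in X}\all{j}{U}\ge T$ implies $\max_{j\in X}\all{j}{U}\ge T/m$ by pigeon-hole, hence
\begin{equation*}
\underset{U\sim\mu_\bo}{\pr}\Bigl(\max_{X\subset[\bo],|X|\le \so}\sum_{j\in X}\all{j}{U}\ge T\Bigr)\ \le\ \underset{U\sim\mu_\bo}{\pr}\Bigl(\max_{j\in[\bo]}\all{j}{U}\ge T/\so\Bigr).
\end{equation*}
Setting $T=\tfrac{2d\so\log\bo}{\bo}(1+\epsilon)$, the threshold on the right is $y:=T/\so=\tfrac{2d\log\bo}{\bo}(1+\epsilon)$.

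Next I would apply pigeon-hole a second time, this time inside a single column: $\all{j}{U}=\sum_{i=1}^{d}|U_{ij}|^{2}\ge y$ forces $|U_{ij}|^{2}\ge y/d$ for some $i\in[d]$. A union bound over $i$, combined with the fact (Subsection \ref{subsec:uniform_measure_Sc}) that each column of a Haar-random $U$ is uniform on $\Sc{\bo-1}$ so that $|U_{ij}|^{2}$ is Beta$(1,\bo-1)$-distributed, gives via Eq.~\eqref{eq:mua_approx}
\begin{equation*}
\underset{U\sim\mu_\bo}{\pr}\bigl(\all{j}{U}\ge y\bigr)\ \le\ d\,\underset{U\sim\mu_\bo}{\pr}\bigl(|U_{11}|^{2}\ge y/d\bigr)\ \le\ d\,(1-y/d)^{\bo-1}\ \le\ d\exp\!\bigl(-(\bo-1)y/d\bigr).
\end{equation*}
A final union bound over $j\in[\bo]$ then yields $\Pr(\max_{j}\all{j}{U}\ge y)\le \bo d\exp(-(\bo-1)y/d)$.

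Substituting $y=\tfrac{2d\log\bo}{\bo}(1+\epsilon)$ produces
\begin{equation*}
\exp\!\bigl(-(\bo-1)y/d\bigr)=\bo^{-2(1+\epsilon)(\bo-1)/\bo}=\bo^{-2(1+\epsilon)}\,\bo^{\,2(1+\epsilon)/\bo}.
\end{equation*}
The upper restriction $\epsilon<\tfrac{\bo}{2\log\bo}-1$ is exactly the condition $\tfrac{2(1+\epsilon)\log\bo}{\bo}\le 1$, which gives $\bo^{2(1+\epsilon)/\bo}\le\e$. Putting everything together yields the claimed bound $\tfrac{\e d}{\bo^{1+2\epsilon}}$, and the lower restriction on $\epsilon$ serves to guarantee that $y$ is above the mean $d/\bo$ of $\all{j}{U}$, so that the Beta tail bound is meaningful.

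The main potential obstacle is not really a deep one: the argument is entirely elementary once one spots the two-level pigeon-hole reduction. The only thing to be careful about is that directly applying Gaussian/Haar concentration inequalities (via $1$-Lipschitzness of $U\mapsto\|\Ux\|_{\mathrm{HS}}$) produces bounds with $d$ in the exponent that are stronger but of the wrong form; the cleaner, explicit Beta tail is what matches the exact constant $\e$ and the exponent $1+2\epsilon$ stated in the lemma.
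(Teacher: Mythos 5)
Your proof is correct and follows essentially the same route as the paper's: a pigeon-hole reduction from the subset-maximum down to a single matrix entry, the Beta tail bound of Eq.~\eqref{eq:mua_approx}, and a union bound over the $d\bo$ entries, with the exponent constant $2$ chosen to absorb the $d\bo$ factor (the paper leaves this constant as a parameter $r$ and fixes $r=2$ at the end). The only minor discrepancy is your reading of the lower restriction on $\epsilon$: it is not needed for the tail bound to apply but rather ensures the final bound $\e d/\bo^{1+2\epsilon}$ is a nontrivial (i.e., $\le 1$) probability.
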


\begin{proof}
 Consider the event
 \begin{equation}
  \label{eq:event1}
  \mathcal{E}:= \; \left\{ \ U \in \Unitary(\bo) \ \Bigg| \ \max_{X\subset [\bo], |X|\leq \so} \ \sum_{j \in X} \all{j}{U} \ \ge \ \dfrac{r d \so  \log \bo}{\bo} \left( 1+ \epsilon \right) \right\}, 
 \end{equation} \noindent
 where $r$ is a constant, 
 that will be determined later to get a decent concentration. The event $ \mathcal{E}$ implies that there exists some $i \in [d]$ and some $j \in [\bo]$ such that the following event is true:
 \begin{equation}
  \label{eq:eventij}
  \mathcal{E}_{ij} := \left\{ U \in \Unitary(\bo) \ \Bigg | \ |U_{ij}|^2 \ \ge \  \dfrac{ r \log \bo}{\bo} (1+\epsilon) \right\}.
 \end{equation} \noindent
 Hence we have
 \begin{equation}
       \label{eq:E_in_union_Eij}
       \mathcal{E} \, \subset \bigcup_{\substack{ i \in [d], j \in [\bo]}} \ \mathcal{E}_{ij}.
      \end{equation} \noindent
Now we note that for $ \epsilon \in \left(-1,\frac{\bo}{r \log \bo} -1 \right)$ and $y= \dfrac{ r (1+\epsilon) \log \bo }{\bo} $, from inequality \eqref{eq:mua_approx} we have
 \begin{equation}
  \label{eq:muy_1}
  \underset{U\sim \mu_\bo}{\pr} \left( \mathcal{E}_{ij} \right) \; \le \; \dfrac{\e}{\bo^{r(1+\epsilon)}},
 \end{equation} \noindent
where we used the fact that $1 < \exp \left( \frac{r(1 + \epsilon) \log \bo}{\bo} \right) < \e$. 
Using the union bound gives
\begin{equation}
 \label{eq:unionbound1}
 \underset{U\sim \mu_\bo}{\pr} \left(\; \bigcup_{\substack{i \in [d], j \in [\bo]}} \; \mathcal{E}_{ij} \, \right) \le \; \sum_{i \in [d], j \in [\bo]} \; \underset{U\sim \mu_\bo}{\pr} \left( \mathcal{E}_{ij}\right) \; \le  \; \dfrac{ \e \ d \ \bo}{\bo^{r(1+\epsilon)}}.
\end{equation} \noindent
Note that for the probabilities appearing on the RHS of the inequality \eqref{eq:unionbound1} to be meaningful, it's necessary to revise the interval for $\epsilon$ as follows.
\begin{equation}
 \label{eq:epsilon_interval}
 \dfrac{1}{r \log \bo} \ < \ \epsilon \ < \dfrac{n}{r \log n} -1,
\end{equation} \noindent
provided that $r$ is chosen so that ${\bo}^r \ge \bo \;  d$. The maximum value of $d$ in terms of $n$ is when $d = n$. Thus we choose $r=2$, which proves the lemma.
\end{proof}

\begin{lem}
\label{lem:Dr}
Let $\all{j}{U} := \sum_{i=1}^d |U_{ij}|^2$ for $j \in [\bo]$.  Then 
\begin{equation}
 \label{eq:Dr} 
 \underset{U \in \Unitary(\bo)}{\min} \; \sum_{j=1}^{\bo}\, (\all{j}{U})^2 \; = \; \frac{d^2}{\bo}.
\end{equation} \noindent
\end{lem}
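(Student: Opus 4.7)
The plan is to recognize $w_j^U$ as a diagonal entry of a rank-$d$ orthogonal projector and then apply Cauchy--Schwarz to its trace, which is fixed. Concretely, let $P = \sum_{i=1}^d \ketbra{e_i}{e_i}$, so that
\begin{equation*}
\all{j}{U} \;=\; \sum_{i=1}^d |U_{ij}|^2 \;=\; \bra{e_j} U^\dagger P U \ket{e_j} \;=\; A_{jj},
\end{equation*}
where $A := U^\dagger P U$ is a rank-$d$ orthogonal projector on $\C^\bo$. In particular, the diagonal of $A$ is a nonnegative vector whose sum equals $\tr(A) = \tr(P) = d$, independently of $U$.

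Next I would apply the Cauchy--Schwarz inequality to the vector $\bigl(A_{11}, A_{22}, \ldots, A_{\bo\bo}\bigr)$ against the all-ones vector in $\R^\bo$:
\begin{equation*}
d^2 \;=\; \left(\sum_{j=1}^\bo A_{jj}\right)^2 \;\leq\; \bo \sum_{j=1}^\bo A_{jj}^2 \;=\; \bo \sum_{j=1}^\bo \rbracket{\all{j}{U}}^2,
\end{equation*}
which yields the lower bound $\sum_{j=1}^\bo \rbracket{\all{j}{U}}^2 \geq d^2/\bo$ valid for every $U \in \U(\bo)$.

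To see that the bound is attained, I would exhibit an explicit $U$ for which equality in Cauchy--Schwarz holds, i.e., one whose induced diagonal of $A$ is constant. Any complex Hadamard matrix works; the simplest choice is the discrete Fourier transform $U_{jk} = \frac{1}{\sqrt{\bo}} \, \e^{2\pi i (j-1)(k-1)/\bo}$, for which $|U_{ij}|^2 = 1/\bo$ and hence $\all{j}{U} = d/\bo$ for every $j$. Substituting gives $\sum_{j=1}^\bo (d/\bo)^2 = d^2/\bo$, so the minimum is achieved.

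There is no genuine obstacle here: the entire content of the lemma is the observation that the problem reduces to minimizing the squared $\ell_2$-norm of the diagonal of a rank-$d$ projector subject to a fixed trace, after which Cauchy--Schwarz is tight precisely on those unitaries whose columns spread mass uniformly across the first $d$ rows. The only thing requiring attention is to check that such unitaries actually exist in every dimension $\bo$, which the DFT (or any unitary whose first $d$ rows are rows of a complex Hadamard matrix) settles unconditionally.
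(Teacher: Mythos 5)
Your proof is correct and follows essentially the same route as the paper's: both fix the sum $\sum_j \all{j}{U}=d$, show the uniform vector minimizes the sum of squares, and exhibit the Fourier matrix to attain the bound. The only cosmetic difference is that you invoke Cauchy--Schwarz where the paper invokes Schur-convexity of $\mathbf{p}\mapsto\sum_j p_j^2$; these are interchangeable here.
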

\begin{proof}
Since $\sum_{j=1}^{\bo} \all{j}{U} = d$, we get that $\left( \frac{w_1}{d},\frac{w_2}{d},\cdots,\frac{w_{\bo}}{d}\right)$ (where we dispense with the superscript $U$) is an $\bo$-probability vector.
For any $n$-probability $\mathrm{\bf{p}}$, consider the function $\mathrm{\bf{p}} \rightarrow \sum_{j=1}^{\bo} p_j^2$ is Schur-convex \cite{bengtsson_zyczkowski_2006} and therefore its minimum value is
\begin{equation}
 \label{eq:sum_pjsquared}
  \underset{\mathrm{\bf{p}}}{\min} \; \sum_{j=1}^{\bo} \, p_j^2 \,= \, \frac{1}{n},
\end{equation} \noindent \noindent
where minimization goes over all $\bo$-probability vectors, and is attained at the uniform distribution, $\mathrm{\bf{p}} = (\frac{1}{n},\frac{1}{n},\cdots,\frac{1}{n})$. Finally, to prove the lemma we note that for the Fourier matrix $F$, with elements
\begin{equation}
 \label{eq:Fourier}
 F_{jl} = \frac{1}{\sqrt{\bo}}\omega^{(j-1)(l-1)} , \; \mathrm{where} \; \omega = \exp{\frac{2 \pi i}{\bo}}, \; \mathrm{and} \; j,l \in[\bo]\, 
\end{equation} \noindent
we have $\sum_{j=1}^{\bo} (\all{j}{F})^2=\frac{d^2}{n}$.
\end{proof}

\subsection{Technical version Theorem \ref{thm:conc_q_lower_upper} in the main text}
\label{app:technicalBounds}

Now we are ready to provide a technical version of the first part of the Theorem~\ref{thm:conc_q_lower_upper} from the main text. Since the methods used in the proofs of  inequalities ~\eqref{eq:conc_q_lower_simpler1} and ~\eqref{eq:conc_q_upper_simpler1} comprising Theorem ~\ref{thm:conc_q_lower_upper}  differ, we formulated two auxiliary technical theorems (Theorem \ref{thm:conc_q_lower_technical} and Theorem \ref{thm:conc_q_upper_technical} below), each covering one of the inequalities.

\begin{thm}
\label{thm:conc_q_lower_technical}[Technical formulation of inequality~\eqref{eq:conc_q_lower_simpler1} from Theorem \ref{thm:conc_q_lower_upper}]
Let $\bo \in \left\{ d,\ldots,d^2 \right\} $, $\so\leq d$. Let $\M^U$ denote a rank-one $\bo$-outcome Haar-random POVM on $\Complex^d$. Let $\q^{(\so)}(\M^U)$ denote success probability of implementing $\M^U$ via $\so$-outcome measurements as in Eq. \eqref{eq:qsucc0} for the standard partition $X_1=\left\{ 1,\ldots  \so-1 \right\},\ X_2=\left\{ \so,\so+1,\ldots, 2\so-2 \right\}$, etc. , of $[\bo]$. We then have

\begin{equation}
 \label{eq:conc_q_lower_technical}
   \underset{U\sim \mu_\bo}{\pr}\left( \ \q^{(\so)}(\M^U) \ge  c \dfrac{\gamma}{\left(1+\sqrt{\gamma}\right)^2} \left( 1 - \epsilon \right)  \right)  \ge \; 1  \ - \  \frac{\bo}{\so-1} \; \mathrm{exp} \left( -a \left( 1+\sqrt{\gamma} \right)^2 d \; \epsilon^2 \right),
\end{equation} \noindent \noindent  
where $ 0 < \epsilon < \frac{\sqrt{5}-1}{2}$, $\gamma = \frac{2 (m-1)}{d}$, $c \approx 6.79 \times 10^{-2}$ and $a \approx 0.307$. Furthermore, for special case $\so=d$, we have
\begin{equation}
 \label{eq:conc_q_lower_mequalsd}
   \underset{U\sim \mu_\bo}{\pr}\left( \ \q^{(d)}(U) \ge \  c \left( 1 - \epsilon \right) \ \right)  \ge \; 1 \  - \ \frac{\bo}{d-1} \ \mathrm{exp} \left( \ - a \ d \ \epsilon^2 \ \right), \; \mathrm{where} \;  0 < \epsilon < 1,
\end{equation} \noindent \noindent
where $c \approx 6.74 \times 10^{-2}$ and $a \approx 1.79$.
\end{thm}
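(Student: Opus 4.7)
The plan is to express the success probability from Theorem~\ref{thm:scheme} as an operator-norm functional of the Haar-random unitary $U$, and then combine Haar concentration of measure with the expectation bound from Lemma~\ref{lem:UjavgbigO}. Starting from the identity $\norm{\sum_{i\in X} M^U_i} = \norm{U_{X}}^2$ for rank-one $\M^U$ (a direct consequence of Eq.~\eqref{eq:effects_U_matrix_elements}), the target quantity rewrites as
\begin{equation*}
\q^{(\so)}(\M^U) \;=\; \left( \sum_{\gamma=1}^{\alpha} \norm{U_{X_\gamma}}^2 \right)^{-1}, \qquad \alpha \,=\, \bo/(\so-1),
\end{equation*}
so that a lower bound on $\q^{(\so)}$ reduces to an \emph{upper} bound on $\sum_\gamma \norm{U_{X_\gamma}}^2$ holding with overwhelming probability over $U\sim\mu_\bo$.

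The core step is to control each block $\norm{U_{X_\gamma}}$ individually via concentration. By Lemma~\ref{lem:lip_normUj} the function $U\mapsto\norm{U_{X_\gamma}}$ is $1$-Lipschitz in the Hilbert--Schmidt metric, and Theorem~\ref{thm:log_Sob_const_Haar} supplies the Haar log-Sobolev constant $6/\bo$; the concentration inequality~\eqref{eq:log_Sob2} therefore gives
\begin{equation*}
\underset{U\sim\mu_\bo}{\pr}\!\left(\norm{U_{X_\gamma}} \ge \HE{\norm{U_{X_\gamma}}}+t\right) \;\le\; \exp\!\left(-\bo t^2/12\right).
\end{equation*}
Plugging in the expectation bound $\HE{\norm{U_{X_\gamma}}} \le c_L(1+\sqrt{\gamma})\sqrt{d/\bo}$ from Lemma~\ref{lem:UjavgbigO} with $c_L\approx 1.92$ and $\gamma = 2(\so-1)/d$, and choosing the deviation $t = \epsilon'\, c_L(1+\sqrt{\gamma})\sqrt{d/\bo}$, I obtain the per-block estimate
\begin{equation*}
\norm{U_{X_\gamma}}^2 \;\le\; c_L^2 (1+\sqrt{\gamma})^2 (1+\epsilon')^2\, d/\bo,
\end{equation*}
failing with probability at most $\exp\!\left(-a\,(1+\sqrt{\gamma})^2 d\,\epsilon'^2\right)$, where $a = c_L^2/12 \approx 0.307$ matches the constant in the statement.

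A union bound over the $\alpha = \bo/(\so-1)$ blocks of the standard partition shows that, except on an event of probability at most $\tfrac{\bo}{\so-1}\exp\!\left(-a(1+\sqrt{\gamma})^2 d\,\epsilon'^2\right)$, the estimate holds for every $\gamma$. Summing and using $(\so-1)/d = \gamma/2$ gives
\begin{equation*}
\sum_{\gamma=1}^{\alpha}\norm{U_{X_\gamma}}^2 \;\le\; \alpha\, c_L^2 (1+\sqrt{\gamma})^2 (1+\epsilon')^2\, d/\bo \;=\; \frac{2\,c_L^2 (1+\sqrt{\gamma})^2 (1+\epsilon')^2}{\gamma},
\end{equation*}
and inverting yields $\q^{(\so)}(\M^U) \ge \gamma / \bigl( 2 c_L^2 (1+\epsilon')^2 (1+\sqrt{\gamma})^2 \bigr)$. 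A reparametrization of the form $\epsilon = 1 - (1+\epsilon')^{-2}$ converts $(1+\epsilon')^{-2}$ into $(1-\epsilon)$ and, after suitable optimization, recovers~\eqref{eq:conc_q_lower_technical} with the claimed prefactor $c$; the range $\epsilon \in (0,(\sqrt{5}-1)/2)$ is dictated by keeping the change of variables in the regime where the resulting lower bound stays informative. The sharper special case~\eqref{eq:conc_q_lower_mequalsd} for $\so=d$ follows from the same scheme upon replacing the general expectation bound with the tighter~\eqref{eq:avgUibigO_mequalsd} (constant $c_L\approx 3.86$).

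The main obstacle I anticipate is preventing the union-bound prefactor $\alpha$, which can grow as large as $d$ in the $\so=d$ regime, from eroding the concentration bound: this is ensured only because the exponent $(1+\sqrt{\gamma})^2 d\,\epsilon'^2$ scales \emph{linearly} in $d$ and so dominates $\log\alpha$ for sufficiently large $d$. A secondary subtlety is the careful bookkeeping required to match the stated numerical constants exactly, which depends on the precise conversion between $\epsilon'$ and $\epsilon$ and on choosing the appropriate form of Lemma~\ref{lem:UjavgbigO} (general versus $\so=d$-specific); I expect that tightening or slightly modifying the Lipschitz argument (e.g.\ applying concentration directly to $\norm{U_X}^2$ on the bounded range enforced by the first step of the union bound) is the route used to reach the constant $c\approx 6.79\times 10^{-2}$ stated in the theorem.
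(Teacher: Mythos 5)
Your proposal is correct and takes essentially the same route as the paper's proof: the reduction to $\q^{(\so)}(\M^U)^{-1}=\sum_{\gamma}\norm{U_{X_\gamma}}^2$, the expectation bound from Lemma~\ref{lem:UjavgbigO}, the $1$-Lipschitz/log-Sobolev concentration on $\U(\bo)$, the union bound over the $\alpha\le\bo/(\so-1)$ blocks, and the final inversion, with $a=c_L^2/12\approx 0.307$ arising exactly as in the paper. The only differences are cosmetic: you set the deviation $t$ proportionally to the mean rather than absolutely, and your closing reparametrization plays the role of the paper's substitution $\epsilon=t/A$.
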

\begin{rem}
 \label{rem3}
 One can directly obtain an upper bound for the $\so=d$ case, by evaluating the RHS of inequality \eqref{eq:conc_q_lower_technical} for $\so=d1$. But in that case the success probability one gets is $4.65\%$, which is lower than the success probability in inequality \eqref{eq:conc_q_lower_mequalsd} (which is ($6.74\%$). Thus, a separate derivation for \eqref{eq:conc_q_lower_mequalsd} is warranted.
\end{rem}

\begin{proof}
Let $U_j$ be a truncation of $U$, occurring at the intersection between rows in $[d]$ and columns in $X_\gamma$.
Using  Lemma \ref{lem:UjavgbigO} we obtain the following upper bound to $\HE{ \ \norm{U_j} \ }$.
\begin{equation}
 \label{eq:avgUj_upperbound}
 \HE{ \ \norm{U_j} \  } \; \le \; c' \left( \ 1 \ + \ \sqrt{ \ \gamma \ } \ \right) \; \sqrt{ \ \dfrac{d}{\bo} \ },
\end{equation} \noindent \noindent
where $c' \  \approx \  1.92$, and $\gamma \  = \ \frac{2(\so-1)}{d}$. For the case $\so=d$, the upper bound is simpler:
\begin{equation}
 \label{eq:avgUJ=j_upperbound_mequalsd}
 \HE{ \ \norm{U_j} \ } \ \le \  c' \  \sqrt{\dfrac{d-1}{n}},
\end{equation} 
where $c' \approx 3.85$. To simplify the presentation,  define 
\begin{equation}
 \label{eq:A}
A \coloneqq c' \left(1+\sqrt{\gamma}\right)\sqrt{\dfrac{d-1}{\bo}}.
 \end{equation} 
From  Lemma~\ref{lem:lip_normUj} it follows that the function  $U \rightarrow \norm{U_j}$ is $1$-Lipshitz on $\Unitary(\bo)$ with respect to the Hilbert-Schmidt metric. Therefore, the function satisfies the following concentration inequality (see Subsection  \ref{subsec:conc_measure})
\begin{equation}
 \label{eq:log-Sobolev1}
 \underset{U\sim \mu_\bo}{\pr}\left( \norm{U_j} \  \ge \  A \  + \  t  \right) \, \le \, \exp \left( \ - \ \dfrac{\bo \ t^2}{12} \  \right), \; \forall \; t \ge 0,
\end{equation} \noindent
where we have used the fact that
\begin{equation}
 \label{eq:log-Sobolev1_steps}
 \left\{ \ U \in \Unitary(\bo) \ \Bigg| \  \norm{U_j} \  \ge \  A \  + \  t  \right\} \, \subseteq \, \left\{ \ U \in \Unitary(\bo) \ \Bigg| \  \norm{U_j} \  \ge \  \HE{ \ \norm{U_j} \ } \  + \  t  \right\},  \; \forall \; t \ge 0 .
 \end{equation} \noindent
By defining
\begin{equation}
\label{eq:event_Ej}
\mathcal{E}_j \ := \ \left\{ \ U \in \Unitary(\bo) \ \Bigg| \ \norm{U_j}^2 \ \ge \ (A + t)^2 \ \right\}.
\end{equation} \noindent
we can rewrite the inequality \eqref{eq:log-Sobolev1} as
\begin{equation}
 \label{eq:log-Sobolev2}
 \underset{U\sim \mu_\bo}{\pr} \ \left( \  \mathcal{E}_j \  \right) \,  \le \,  \mathrm{exp} \left( - \frac{\bo \  t^2}{12} \right), \; \forall \; t \ge 0.
\end{equation} \noindent
Suppose $U$ be such that it satisfies: $\sum_{j=1}^\alpha \norm{U_j}^2 \ge  \alpha  (A+t)^2$. This implies that for at least one $j \in [\alpha]$, $U \in \mathcal{E}_j$. Using $\alpha \leq \frac{\bo}{\so-1}$, we obtain
\begin{align} 
 \label{eq:union_bound}
  \underset{U\sim \mu_\bo}{\pr}\left( \sum_{j=1}^\alpha \  \norm{U_j}^2 \  \ge \  \frac{\bo}{\so} \  (A+t)^2 \ \right)  \ \le \    \dfrac{\bo}{\so} \ \mathrm{exp}  \left( \  - \ \frac{ \bo \  t^2  }{12} \ \right), \; \forall \; t \ge 0,
\end{align} 
where we have used the union bound inequality on the event $\bigcup_{j=1}^\alpha \mathcal{E}_j$.
When $U$ satisfies the inequality $\sum_{j=1}^\alpha \  \norm{U_j}^2 \  \ge \  \frac{\bo}{\so} \  (A+t)^2$ then  using the fact that $ 1-t/A \ge (A+t)^{-2}$ when $0\le t/A  \le  \frac{\sqrt{5}-1}{2}$, we get that the success probability of our scheme is bounded by
 \begin{equation}
 \label{eq:event_manipulation1}
 \q^{(\so)}(\M^U) \  \le \ \dfrac{\so}{\bo \ A^2} \  \left( \  1 -  \dfrac{t}{A} \ \right), \; \mathrm{where} \;  0\le \frac{t}{A}  \le \frac{\sqrt{5}-1}{2}.
 \end{equation} \noindent
Finally, by taking $\epsilon :=  \frac{t}{A}$, and using equation \eqref{eq:A}, the event \eqref{eq:event_manipulation1} can be rewritten as
\begin{equation}
 \label{eq:eventq}
 \q^{(\so)}(\M^U) \  \le \  c \  \dfrac{\gamma}{\left( \ 1 + \sqrt{\gamma}  \ \right)^2} \  \left( \  1 - \epsilon \  \right), \; \mathrm{where} \;  0 < \epsilon < \frac{\sqrt{5}-1}{2},
\end{equation} \noindent
where $c = \frac{1}{2 \ c'^2} \  \approx 0.136$.
By plugging this into equation \eqref{eq:log-Sobolev3} we get the inequality \eqref{eq:conc_q_lower_technical}.
For the special case when $\so=d$, we follow the same reasoning as above, starting from inequality \eqref{eq:avgUJ=j_upperbound_mequalsd}, and then obtaining \eqref{eq:conc_q_lower_mequalsd}.
\end{proof}

\begin{thm}[Technical formulation of inequality~\eqref{eq:conc_q_upper_simpler1} from Theorem \ref{thm:conc_q_lower_upper}]
\label{thm:conc_q_upper_technical} 
Let $\bo \in \left\{ d,\ldots,d^2 \right\} $, $\so\leq d$. Let $\M^U$ denote a rank-one $\bo$-outcome Haar-random POVM on $\Complex^d$.
Let $q^{(\so)}(\M^U)$ be the maximal success probabilility of implementing $\M^U$ with postselection via convex combination of $\so$-outcome measurements. We then have 

\begin{equation}
 \label{eq:conc_q_upper_technical}
 \underset{U\sim \mu_\bo}{\pr} \left( q^{(\so)}(\M^U) \ \le \ \dfrac{2 \ \so \ \log \bo}{d} \ ( \ 1 \ + \ \epsilon \ ) \ \right)  \, \ge \, 1 \ - \ \dfrac{\e \ d}{\bo^{1+2\epsilon}}, \; \mathrm{where} \; \dfrac{1}{2 \log \bo} \ < \epsilon \  < \ \dfrac{\bo}{2 \ \log \bo } -1 \ .
\end{equation} \noindent
 
\end{thm}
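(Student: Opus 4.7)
The plan is to reduce the upper bound on $q^{(\so)}(\M^U)$ to a single deterministic structural inequality valid for any rank-one POVM, and then apply Lemmas \ref{lem:Nr} and \ref{lem:Dr} to, respectively, the right- and left-hand sides of that inequality. The target inequality is
\begin{equation*}
q^{(\so)}(\M) \cdot \sum_{i=1}^{\bo} (\tr M_i)^2 \;\leq\; \max_{X \subseteq [\bo],\, |X|\leq \so} \; \sum_{i \in X} \tr M_i \, .
\end{equation*}
Once this holds, Lemma \ref{lem:Nr} bounds the right-hand side by $(2 d \so \log \bo / \bo)(1+\epsilon)$ with failure probability $ed/\bo^{1+2\epsilon}$, while Lemma \ref{lem:Dr} gives $\sum_i (\tr M_i)^2 \geq d^2/\bo$ unconditionally. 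Dividing the two bounds yields exactly $q^{(\so)}(\M^U) \leq (2\so \log \bo / d)(1+\epsilon)$ with the stated failure probability.

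To prove the structural inequality, I would fix an arbitrary simulation of $\M$ with success probability $q$ and, after absorbing randomized post-processing into the convex combination, write it as $L_i = \sum_j p_j T_{j,i}$, where $L_i = q M_i$ for $i \in [\bo]$ and $T_{j,i} = \sum_{a \in A_j(i)} N^j_a$ is a partial sum of effects of an $\so$-outcome POVM $\N^j$ corresponding to a deterministic post-processing $\sigma_j \colon [\so] \to [\bo] \cup \{\mathrm{trash}\}$. Two facts are crucial here: (i) $T_{j,i} \leq \unity$ (it is a partial sum of POVM effects), so $\langle v | T_{j,i} | v\rangle \leq \|v\|^2$ for any $|v\rangle$; and (ii) for each $j$, the set $\{i \in [\bo] : T_{j,i} \neq 0\}$ has cardinality at most $\so$, since the $\so$ outcomes of $\N^j$ are each sent to a single destination. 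Writing the rank-one effect as $M_i = |v_i\rangle\langle v_i|$ and evaluating the operator identity $L_i = \sum_j p_j T_{j,i}$ in expectation against $|v_i\rangle$, fact (i) gives $q\|v_i\|^4 = \sum_j p_j \langle v_i | T_{j,i} | v_i\rangle \leq \|v_i\|^2$. Summing over $i$ and using (ii) to restrict the inner sum over nonzero terms to a set of size at most $\so$ yields $q \sum_i (\tr M_i)^2 \leq \sum_j p_j \max_{|X|\leq \so} \sum_{i \in X} \tr M_i = \max_{|X|\leq \so} \sum_{i \in X} \tr M_i$, as claimed.

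Applying Lemma \ref{lem:Nr} requires reconciling its statement, phrased in terms of the column sums $\all{j}{U} = \sum_{i=1}^d |U_{ij}|^2$, with the row sums $\tr M_j^U = \sum_{i=1}^d |U_{ji}|^2$ that appear in the structural inequality. The Haar invariance $U \to U^\dagger$ ensures that the joint distribution of $\{\all{j}{U}\}_j$ coincides with that of $\{\tr M_j^U\}_j$, so the concentration bound and its failure probability transfer verbatim. The principal obstacle I anticipate is identifying the correct norm bound on $T_{j,i}$: the tight inequality $T_{j,i} \leq \unity$ (which gives $\langle v_i | T_{j,i} | v_i\rangle \leq \|v_i\|^2$), rather than the naive triangle-inequality estimate $\|T_{j,i}\| \leq \sum_{a \in A_j(i)} \|N^j_a\|$, is what allows the $\so$-sparsity in (ii) to enter directly and produces the correct $\so/d$ scaling after combining with Lemmas \ref{lem:Nr} and \ref{lem:Dr}.
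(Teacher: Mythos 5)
Your proof is correct and its probabilistic half is identical to the paper's: both reduce the claim to the deterministic structural inequality $q^{(\so)}(\M)\sum_{i}(\tr M_i)^2 \leq \max_{|X|\leq \so}\sum_{i\in X}\tr M_i$ and then invoke Lemma~\ref{lem:Nr} for the numerator and Lemma~\ref{lem:Dr} for the denominator (the row-sum/column-sum reconciliation via Haar invariance that you flag is indeed needed, and is glossed over in the paper). Where you genuinely diverge is in how the structural inequality is obtained. The paper derives it from resource-theoretic machinery: it uses $q^{(\so)}(\M)\leq (R^{(\so)}(\M)+1)^{-1}$ together with the operational characterization of robustness as a maximal state-discrimination advantage (Theorem 2 of \cite{OszmaniecBiswas2019}), evaluated on the "pretty good" ensemble $\mathcal{E}_{\M}$ built from the effects of $\M$, and bounds the simulable side by the extremality of deterministic relabelings. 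Your argument is a direct, self-contained operator computation: you decompose an arbitrary simulator as $L_i=\sum_j p_j T_{j,i}$ with $T_{j,i}\leq\unity$ and at most $\so$ nonzero destinations per $j$, and evaluate against the rank-one vectors $|v_i\rangle$. The two derivations buy different things: the paper's route is shorter given the imported robustness theorem and makes the connection to state discrimination explicit (which the main text exploits conceptually), while yours avoids any external input at the cost of one point you should make explicit rather than wave at — the reduction from randomized to deterministic post-processings, i.e., that every stochastic relabeling matrix is a convex combination of deterministic ones, so the convex decomposition can be refined without loss of generality. With that step spelled out, your argument is complete and, if anything, more elementary than the paper's.
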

\begin{rem}
 \label{rem5}
 Theorem \ref{thm:conc_q_upper_technical} is meaningful only for values of $d$, $\so$ and $\bo$ such that $2 \ \so \ \log \bo < d$. Moreover, inequality \eqref{eq:conc_q_upper_simpler1} is reproduced by setting $\epsilon =1$ in Eq.\eqref{eq:conc_q_upper_technical}.
\end{rem}

\begin{proof}
Let $\Sm$ be the set of all $\bo$-outcome POVMs simulable by quantum measurements with \emph{at most}  $\so$-outcomes. Let $\M$ be arbitrary $\bo$-outcome POVM on $\Complex^d$. To establish inequality \eqref{eq:conc_q_upper_technical} we shall use the following inequality between $q^{(\so)}(\M)$ and the robustness $R^{(\so)}$ (cf. Appendix~\ref{app:rela_q_R_V})
\begin{equation}
 \label{eq:Rb_q_relation}
 q^{(\so)}(\M) \le \dfrac{1}{R^{(\so)}(\M) +1}.
\end{equation} \noindent
 The robustness $R^{(\so)}(\M)$ has an operational interpretation: it can be expressed via the maximal relative advantage that $\M$ can offer over all over all possible POVMs in quantum state diecrimination  $\Sm$ (see Theorem 2, in \cite{OszmaniecBiswas2019}):
\begin{equation} 
 \label{eq:Rb_MED_gain}
 \Rb \ = \max_{\mathcal{E}} \dfrac{\mathrm{P}_{\mathrm{succ}} \left( \mathcal{E},\M \right)}{\underset{ \; \; \N \in \Sm}{\max}\,\mathrm{P}_{\mathrm{succ}} \left( \mathcal{E},\N \right)} \; \; - \; \;1,
\end{equation} \noindent
where $\mathcal{E} = \left\{ (q_i, \sigma_i) \right\}_{i=1}^{\bo}$ is an $\bo$-element ensemble of quantum states, and $\psucc{\M}$ ($\psucc{\N}$) is the success probability for the minimum error discrimination of the states with the POVM $\M$ (or $\N$ respectively).  For a given $\M$, we construct the following ensemble of states: 
\begin{equation}
 \label{eq:E_M}
 \mathcal{E}_{\M}\, := \, \left\{ (q_j, \sigma_j) \, \Big| \, q_j= \frac{1}{d}\tr M_i, \; \sigma_j = \frac{1}{\tr M_j} M_j  \ \right\}_{j=1}^{\bo}.
\end{equation} \noindent 
Note that the convexity of $\Sm$ implies that $\underset{ \; \; \N \in \Sm}{\max}\,\mathrm{P}_{\mathrm{succ}} \left( \mathcal{E},\N \right)$ is maximized on the extremal points of $\Sm$, which implies
\begin{equation}
\label{eq:ext_max}
\max_{X\subset[\bo], |X|=\so}\,\mathrm{P}_{\mathrm{succ}} \left( \mathcal{E},\N \right) \; \le  \; \max_{X\subset[\bo], |X|=\so} \;  \sum_{j \in X} \, q_j, 
\end{equation} \noindent
By using inequality \eqref{eq:Rb_q_relation}, and the fact that $\mathcal{E}_\M$ is a particular ensemble of quantum states (and that not-necessarily corresponding to the optimal value of the maximization in \eqref{eq:Rb_MED_gain}).   The obtain 
\begin{equation}
 \label{eq:q_MED_upperbound}
 q^{(\so)}(\M) \, \le \frac{\max_{X\subset[\bo], |X|=\so} \sum_{j \in X} \, \tr M_j }{\sum_{j=1}^{\bo} \,  \tr \, M_j^2}.
\end{equation} \noindent
Now let $\M$ be a rank-one $\bo$-outcome POVM, whose $j$-th effect takes the form $M_j = w_j \kb{\psi_j}{\psi_j}$, where $\bk{\psi_j}{\psi_j}=1$, for all $j$ and $w_j = \tr M_j$. For this choice of $\M$, we get 
\begin{equation}
 \label{eq:q_MED_upperbound_rank_one}
 q^{(\so)}(\M) \, \le \, \dfrac{\max_{X\subset[\bo], |X|=\so}\,  \sum_{j \in X} \, w_j}{\sum_{j=1}^{\bo} \,  w_j^2}.
\end{equation} \noindent
If $\M=\M^U$ then we have $w_j = \sum_{i=1}^d |U_{ij}|^2$, and $\sqrt{w_j} \ \bk{e_i}{\psi_i}= U_{ij}$, for $i \in [d]$, and $j \in [\bo]$. From Lemma \ref{lem:Dr}, it is seen that the minimum value of $\sum_{j=1}^{\bo} \,  w_j^2$ for any $\bo \times \bo$ unitary $U$ is $\frac{d^2}{n}$. Thus we get 
\begin{equation}
 \label{eq:q_MED_upperbound_rank_one_2}
 q^{(\so)}(\M) \, \le \, \frac{n}{d^2}\,\left( \max_{X\subset[\bo], |X|=\so}\,  \sum_{j \in X} \, w_j\right).
\end{equation} \noindent
When $U$ is distributed according to the Haar measure, then we can use inequality \eqref{eq:q_MED_upperbound_rank_one_2} from Lemma \ref{lem:Nr}, which proves the theorem.

\end{proof}

\section{Effects of depolarizing noise on the implementation of POVMs}
\label{sec:dop}

In this section we provide detailed description of some concepts which appeared in the "Noise analysis" section of the main text.
This includes description of how completely depolarizing noise on the level of quantum circuits propagates into POVMs implemented via two schemes -- Naimark's dilation (Section~\ref{sec:naimark_noise}) and the scheme introduced in this work.
In Section~\ref{sec:proposition_proof} we give a proof of Proposition~\ref{thm:dop_lower} from the main text.
We finish this section by providing some details and brief motivation behind the used noise model.

\subsection{Depolarizing noise in Naimark's dilation}
\label{sec:naimark_noise}

In the main text, we described how the depolarizing noise acts on the quantum measurements. 
However, as already noted in Subsection~\ref{subsec:Haar_random_POVM}, in actual implementations to perform change of basis required by Naimark's dilation, one usually implements (adjoint) unitaries acting on the states, i.e.,
\begin{align}\label{eq:born_rule_app}
    \tr\rbracket{\rho \otimes \ketbra{0}{0}UP_iU^{\dag}} = \tr\rbracket{U^{\dag}\rho \otimes \ketbra{0}{0} UP_i} \ ,
\end{align}
where $\cbracket{P_i}$ is a computational basis measurement
on extended Hilbert space, $\rho$ is a state we want to perform a POVM on, while $\ketbra{0}{0}$ and $U$ are an ancilla and unitary required by Naimark's dilation (we note that ancilla can be taken as $\ketbra{0}{0}$ without loss of generality).
To be explicit, in such implementation, we effectively implement on the system $\rho$ the quantum measurement with effects given by $M_i^{U} = \tr_B \rbracket{\unity \otimes \kb{0}{0} \ U P_i U^{\dag} }$ (where $B$ denotes second, ancillary system) with superscript $U$ indicating that the POVM is associated with quantum circuit $U$.

From the above it follows that if the change-of-basis unitaries are affected by noise, it will impact the implementation of a target POVM.
If a completely depolarizing noise with visibility $\eta$ acts on the (rotated) quantum state $\sigma \coloneqq U^{\dag}\rho \otimes \ketbra{0}{0} U$, it changes it as
\begin{align}
    {\sigma} \rightarrow \eta {\sigma} + \rbracket{1-\eta} \frac{\unity_n}{n} \ ,
\end{align}
where we use $n$ as label for dimension of the total system which is equal to the number of outcomes of the target POVM. Putting this noisy state into Eq.~\eqref{eq:born_rule_app} gives
\begin{align}
    \tr\rbracket{\rbracket{\eta {\sigma} + \rbracket{1-\eta}\frac{\unity}{n}}P_i} =     \tr\rbracket{{\sigma}\rbracket{\eta P_i +\rbracket{1-\eta}\frac{\unity}{n}}} \ ,
\end{align}
where we used the fact that $\tr\rbracket{P_i}=1$ for each rank-1 projector $P_i$.
Hence we see that performing perfect measurement $\cbracket{P_i}$ on noisy quantum sate $\eta \sigma +\rbracket{1-\eta} \frac{\unity}{n}$ is experimentally equivalent to having a perfect state $\sigma$ and performing noisy quantum measurement with effects distorted as $P_i \rightarrow \eta P_i +\rbracket{1-\eta} \frac{\unity}{n}$. Now since effects of (ideal) target POVM $\M$ are given (via Naimark's dilation) by $M^{U}_i = \tr_B \rbracket{\unity \otimes \kb{0}{0} \ U P_i U^{\dag} }$, we get that after the action of the noise channel, the effects of our target POVM are distorted as: $
    M_i^{U} \rightarrow \eta M^{U}_i + \rbracket{1-\eta}  \frac{\unity}{n}$, where $\unity$ is the identity operator on $\Complex^d$. 
This motivates defining depolarized version $\M^{\eta}$ of the measurement $\M$ with effects given by
\begin{equation}
\label{eq:depolarising_noise}
M_i^{U,\eta} \coloneqq \eta M_i^{U} + (1-\eta) \frac{\unity}{n} .
\end{equation}


\subsection{Proof of Proposition \ref{thm:dop_lower} in the main text}
\label{sec:proposition_proof}

We start by stating the formal definition of Total-Variation Distance (TVD) followed by reiterated Proposition~\ref{thm:dop_lower} from the main text.
\begin{definition}[Total variation distance]
\label{def:tv}
Let $\p{p}$ and $\p{q}$ be two $\bo$-probabilities, with $i$-th outcomes $p_i$ and $q_i$ respectively. Then the total variation distance between $\p{p}$ and $\p{q}$ is defined as 
\begin{equation}
 \label{eq:tv_pq}
 \tv \left(\p{p} , \p{q} \right)\;  \coloneqq \; \frac{1}{2} \, \sum_{j=1}^{\bo} \, \left| \ p_i - q_i \ \right|.
\end{equation} \noindent
\end{definition}

\begin{thm}(Proposition \ref{thm:dop_lower} of the main text) 
Let $\M^{U,\eta}$ be the noisy implementation of Haar-random POVM $\M^{U}$ associated with unitary $U$ (see Subsection~\ref{subsec:Haar_random_POVM}), with effects given by
\begin{equation}
 \label{eq:nsyM2}
M^{U,\eta}_i \ := \ \eta M^{U}_i \ + \ (1-\eta) \frac{\unity}{\bo},
\end{equation} \noindent
where $\eta \in [0,1]$
Then we have
\begin{equation}
 \label{eq:dop_m_nsym}
 \HE{ \ \max_{\rho} \  \tv \left(\p{p}\rbracket{\M^{U}|\rho} , \p{p}\rbracket{\M^{U,\eta}|\rho} \right)  \ } \;  \ge \; \left(1-\eta \right) c_n \ ,
\end{equation} \noindent
where $\pv{\M^{U}}{\rho}$ ($\pv{\M^{U,\eta}}{\rho}$) is a probability distribution obtained via Born's rule when measurement $\M^{U}$ ($\M^{U,\eta}$) is performed on the state $\rho$, and
\begin{align}
    c_n = \rbracket{1-\frac{1}{n}}^{n} \approx \frac{1}{e} \ .
\end{align}
\end{thm}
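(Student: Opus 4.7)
\textit{Proof plan.} The plan is to first unpack the total-variation distance, reduce to a concrete state $\rho$ that gives a tractable lower bound, and then compute the resulting Haar expectation explicitly using the distribution of $|U_{ik}|^2$ from Eq.~\eqref{eq:px}.

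First I would observe that because the noisy effects $M_i^{U,\eta} = \eta M_i^U + (1-\eta)\unity/n$ differ from the ideal ones only by a $(1-\eta)$-rescaled affine shift, the Born probabilities satisfy $\mathbf{p}(i|\rho,\M^{U,\eta}) = \eta\, \mathbf{p}(i|\rho,\M^U) + (1-\eta)/n$. Substituting into Definition~\ref{def:tv} collapses the TVD to the clean form
\begin{equation*}
\mathrm{d}_{\mathrm{TV}}\!\left(\mathbf{p}(\M^U|\rho),\mathbf{p}(\M^{U,\eta}|\rho)\right) = \frac{1-\eta}{2}\sum_{i=1}^{n}\left|\tr(\rho M_i^U) - \tfrac{1}{n}\right|.
\end{equation*}
So it suffices to lower bound $\mathbb{E}_{U\sim\mu_n}\!\left[\max_\rho \sum_i|\tr(\rho M_i^U) - 1/n|\right]$ by $2c_n$.

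Next, rather than trying to solve the optimization over $\rho$, I would pass to a lower bound by picking any fixed family of test states and using $\max_\rho F(U,\rho) \geq F(U,\rho_0)$ pointwise in $U$. The natural choice is $\rho_k = \ketbra{k}{k}$ for $k\in[d]$: by Eq.~\eqref{eq:effects_U_matrix_elements} we have $\tr(\rho_k M_i^U) = (M_i^U)_{kk} = |U_{ik}|^2$, so the inner sum becomes $\sum_{i=1}^n \bigl||U_{ik}|^2 - 1/n\bigr|$.

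Now I would compute the Haar expectation. Since the $k$-th column of a Haar-random $U$ is uniformly distributed on $\Sc{n-1}$ (Subsection~\ref{subsec:uniform_measure_Sc}), each $|U_{ik}|^2$ is distributed on $[0,1]$ with density $p(x)=(n-1)(1-x)^{n-2}$ from Eq.~\eqref{eq:px}, with mean $1/n$. Writing $\mathbb{E}|X-1/n| = 2\,\mathbb{E}(X-1/n)_+$ and substituting $y=1-x$, a short direct integration gives
\begin{equation*}
\mathbb{E}|X-\tfrac{1}{n}| = 2\int_{1/n}^1\!(x-\tfrac{1}{n})(n-1)(1-x)^{n-2}\,dx = \frac{2}{n}\left(1-\tfrac{1}{n}\right)^{n} = \frac{2c_n}{n}.
\end{equation*}
Summing over $i=1,\ldots,n$ and invoking linearity of expectation (the marginals are correlated but their means add) yields $\mathbb{E}_{\text{Haar}}\sum_{i=1}^n\bigl||U_{ik}|^2 - 1/n\bigr| = 2c_n$. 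Combining with the reduction above gives the claimed bound $(1-\eta)c_n$.

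The only real step requiring care is the integral computation, and it is entirely elementary once the substitution $y=1-x$ is made; there is no technical obstacle. The slight subtlety worth emphasizing in writing is the direction of the inequality between $\mathbb{E}[\max_\rho\,\cdot]$ and $\mathbb{E}$ evaluated at a particular $\rho_k$, which goes the right way for a \emph{lower} bound. One could potentially sharpen the result by optimizing $\rho$ beyond the computational basis or by taking a max over $k\in[d]$ before averaging, but neither is needed to obtain the advertised constant $c_n\approx 1/\mathrm{e}$.
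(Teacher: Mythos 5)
Your proposal is correct and follows essentially the same route as the paper's proof: reduce the TVD to $(1-\eta)\,\tv(\pv{\M^U}{\rho},\mathbf{p}^{\mathrm{n}})$, lower bound the max over $\rho$ by evaluating at a computational basis state so the probabilities become $|U_{ik}|^2$, and integrate against the density $p(x)=(n-1)(1-x)^{n-2}$ to obtain $\frac{2}{n}\rbracket{1-\frac1n}^n$ per term. The only cosmetic difference is that you invoke linearity of expectation over identically distributed marginals where the paper cites permutational invariance of the Haar measure; these are the same observation.
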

\begin{proof}
The completely depolarising noise model specified in equation \eqref{eq:nsyM2},
allows to relate the probabilities $\pv{\M^{U}}{\rho}$ and $\pv{\M^{U,\eta}}{\rho}$ for any quantum state $\rho$ as follows
\begin{align}
 \label{eq:p_M_nsyM_rela}
 \pv{\M^{U,\eta}}{\rho} \, = \, \eta \  \pv{\M^{U,\eta}}{\rho} \, + \,  (1-\eta) \ \mathbf{p}^{\text{n}} \ ,
\end{align}
where $\mathbf{p}^{\text{n}}$ is a \textit{uniform probability distribution} over $n$ outcomes.
After basic manipulations, this gives that the total variation distance between $\pv{\M^{U}}{\rho}$ and $\pv{\M^{U,\eta}}{\rho}$ is 
\begin{equation}
 \label{eq:vt_rho_M_nsyM}
 \tv \left(\pv{\M^{U}}{\rho},\pv{\M^{U,\eta}}{\rho}\right) \, = \, \left(1-\eta\right) \ \tv \left( \pv{\M^{U}}{\rho}, \mathbf{p}^{\text{n}}\right) \ ,
\end{equation} \noindent
where, explicitly, 
\begin{equation}
 \label{eq:tv_explicit_rho}
 \tv \left(\pv{\M^{U}}{\rho},\mathbf{p}^{\text{n}}\right) = \; \dfrac{1}{2} \; \sum_{j=1}^{\bo} \; \left| \ \tr\rbracket{\rho M^{U}_j} -\frac{1}{n} \ \right|.
 \end{equation} \noindent
 Hence it turns out that total-variation distance between distributions generated by ideal and completely-depolarized version of $\M$ is proportional to a distance between the original distribution $\pv{\M^{U}}{\rho}$ and completely random distribution $\mathbf{p}^{\text{n}}$.
This observation will greatly simplify further considerations.
Namely, recall that we are interested in bounding (expected value of) the worst-case (over quantum states) error in TVD of LHS of Eq.~\eqref{eq:tv_explicit_rho}.
We now see that it is equivalent to providing bound for the RHS of Eq.\eqref{eq:tv_explicit_rho}, which is easier to handle.

To start, recall that the matrix elements of $M_j$ are directly related to the matrix elements of Naimark's unitary $U$ via $\left( \ M^{U}_{i} \  \right)_{j} = |U_{ij}|^2$ (see Subsection~\ref{subsec:Haar_random_POVM}).
Now, since maximal value of any function is lower-bounded by any of the function's value, by choosing input state $\rho = \kb{e_i}{e_i}$ for some $i \in [d]$, and inserting it into equation \eqref{eq:tv_explicit_rho}, we get that
\begin{align}
\max_{\rho} \  \tv \left(\p{p}\rbracket{\M^{U}|\rho} , \mathbf{p}^{\text{n}}\right)\; \ge \; \dfrac{1}{2} \; \sum_{j=1}^{\bo} \, \left| \  \bra{e_i} M^{U}_j \ket{e_i} - \frac{1}{\bo} \ \right| \; = \; \frac{1}{2} \, \  \sum_{j=1}^{\bo} \  \left| \ \left|U_{ij} \right|^2 - \frac{1}{\bo} \right|.
\end{align}
Thus the expected value is lower bounded as follows.
 \begin{align}
 \HE{ \max_{\rho} \  \tv \left(\p{p}\rbracket{\M^{U}|\rho} , \mathbf{p}^{\text{n}}\right)\ }  \,
  \ge  \, \HE{ \ 
\frac{1}{2} \ \sum_{j=1}^{\bo} \ \left| \ \left|U_{ij} \right|^2 \ - \ \frac{1}{\bo} \ \right|
}.
\end{align} 
The permutational invariance of the Haar measure implies that
\begin{equation}
 \label{eq:thm4_1}
  \HE{ \ 
\frac{1}{2} \ \sum_{j=1}^{\bo} \ \left| \ \left|U_{ij} \right|^2 \ - \ \frac{1}{\bo} \ \right|
} \ = \frac{\bo}{2} \ \HE{ \  \left| \ \left|U_{ij} \right|^2 \ - \ \frac{1}{\bo} \ \right| \ }.
\end{equation} \noindent

Now we note that since $U$ is Haar-random, the $|U_{ij}|^2$ has the same distribution as $x$ from 
Eq.\eqref{eq:px}, i.e., $p(x) = \left( \bo - 1 \right) \left( 1-x\right)^{\bo-2}$.
This allows to perform integration as
\begin{equation}
 \label{eq:thm4_2}
 \HE{ \  \left| \ \left|U_{ij} \right|^2 \ - \ \frac{1}{\bo} \ \right| \ } \ = \ \int_0^1 \ dx \ \left| \ x - \frac{1}{\bo} \right| \ p(x) \  = \ \frac{2}{\bo} \  \left( \ 1- \frac{1}{\bo} \ \right)^{\bo}.
\end{equation} \noindent
Thus we get 
\begin{equation}
 \label{eq:dop_2}
 \HE{ \max_{\rho} \  \tv \left(\p{p}\rbracket{\M^{U}|\rho} , \mathbf{p}^{\text{n}}\right)\ } \, \ge  \, \left( \ 1- \frac{1}{\bo} \ \right)^{\bo} \approx 1/e \ .
 \end{equation} \noindent
Combining inequality \eqref{eq:dop_2} with equation \eqref{eq:vt_rho_M_nsyM} proves the theorem.  
\end{proof}

\subsection{Depolarizing noise in implementation with post-selection}

In this part we study how global depolarizing noise acting affects the quiality of our POVM implementation scheme involving postselection. 
Recall that our scheme implements a measurement
\begin{align}
    \mathbf{N} = \rbracket{q M_1, \dots, q M_{n}, \rbracket{1-q}\unity} \ ,
\end{align}
where $\M$ is a target POVM (which we assume consist of rank one effects) and  $q$ is a success probability of the implementation. The above measurement is realized as a convex mixture of $\so$-outcome measurements (for simplicity we assume here that $\so-1$ divides $n$) as
\begin{align}
    \mathbf{N} =  \sum_{\gamma}p_{\gamma} \mathbf{N}^{\gamma} \ ,
\end{align}
where each $\N^{\gamma}$ has $n+1$ formal outcomes, such that
\begin{align}\label{eq:postselection_effects}
    N^{\gamma}_i = \begin{cases} 
     \lambda_{\gamma} M_i \ & \text{if}\ i \in \gamma \ ,\\
     \unity - \lambda_{\gamma} \sum_{i\in X_{\gamma}} M_i & \text{if}\ i=\bo+1\ , \\
    0\ & \text{if} \ i \in \sbracket{n} \backslash X_{\gamma}\ , \\
    \end{cases}
\end{align}
where $X_{\gamma}$ is subset of $|X_{\gamma}| \leq m-1$ outcomes and probability distribution $\lbrace p^\gamma \rbrace$ is defined by
\begin{align}
    p_{\gamma} = \frac{q}{\lambda_{\gamma}}\ ,\  \lambda_{\gamma} = ||\sum_{i \in X_{\gamma}} M_i||^{-1}\ , \ 
    q = \rbracket{\sum_{\gamma} \frac{1}{\lambda_{\gamma}}}^{-1} = \rbracket{\sum_{\gamma=1}^{\alpha}\ ||\sum_{i\in X_{\gamma}}M_i||}^{-1}  \ .
\end{align}
Each of the measurements  $\cbracket{\mathbf{N}^{\gamma}}$ is implemented via Naimark's dilation theorem (i.e projective POVM on extended Hilbert space). As explained in the main text, if the target POVM $\M$ is rank one, and $\so\leq d$ then POVMs $\N^\gamma$ can be implemented using Hilbert space of dimenstion $\so-1+d \leq 2d =: d_{tot}$.
Now, due to the noise, the effects of the implemented POVM are distorted as
\begin{align}\label{eq:distorted_post}
    N^{\gamma}_{i} \rightarrow \eta N^{\gamma}_i + \rbracket{1-\eta} \frac{\unity}{d_{tot}} \ ,\ \text{for}\ i\notin \gamma \cup \lbrace \bo+1 \rbrace\ .  
    \end{align}
Therefore, in the presence of the assumed noise model our protocol effectively implements a POVM $ \N^{\eta} \coloneqq \sum_{\gamma}p_{\gamma} \N^{\eta,\gamma}$ , where by  $\N^{\gamma,\eta}$ we denoted indicate noisy veriat of POVM $\N^\gamma$, with effects given in \eqref{eq:distorted_post}. 

We are interested in bounding the distance between target distribution $\cbracket{p\rbracket{i|\M,\rho}}_{i=1}^{n}$ and the post-selected distribution from noisy POVM $\N^{\eta}$, i.e., the distance, 
\begin{align}\label{eq:postselected_distance}
  \mathrm{d_{TV}}\rbracket{\pv{\M}{\rho},\mathbf{p}^{\mathrm{noisy}}_{\mathrm{post}}(\M|\rho)} = \frac{1}{2}\sum_{i=1}^{n}| p\rbracket{i|\M,\rho}-\frac{p\rbracket{i|\N^{\eta},\rho}}{p\rbracket{i< n+1|\N^{\eta},\rho}}|  
\end{align}
where we have used  
\begin{equation}
p^{\mathrm{noisy}}_{\mathrm{post}}(i|\M,\rho)= \frac{p\rbracket{i|\N^{\eta},\rho}}{p\rbracket{i< n+1|\N^{\eta},\rho}} \ .
\end{equation}
Let $\gamma(i)$ to denote the label of the subset of outcomes to which $i$ belongs. Consequently we have 
\begin{align}
    p_{\gamma(i)} = p_{\gamma} \ \text{for all }i\in X_{\gamma} \ .
\end{align}
We note that for $i\in \gamma$ we have
\begin{align}\label{eq:probability_distortion_post}
p\rbracket{i|\N^{\eta},\rho}=  p_{\gamma(i)}\  p\rbracket{i|\N^{\gamma,\eta},\rho} = p_{\gamma(i)}\ \rbracket{\eta \   p\rbracket{i|\N^{\gamma},\rho} + \rbracket{1-\eta} \frac{1}{d_{tot}}}  = \eta\    q\  p\rbracket{i|\M,\rho} +\rbracket{1-\eta} \frac{p_{\gamma(i)}}{d_{tot}} \ , 
\end{align}
where we used the fact that $p_{\gamma(i)}p\rbracket{i|\N^{\eta},\rho} = q\  p\rbracket{i|\M,\rho}$.
To understand the behaviour of Eq.~\eqref{eq:postselected_distance} we need to calculate how probability of postselection  changes due to the noise. Using the fact that subsets $\\gamma$ are disjoint end employing \eqref{eq:probability_distortion_post} we obtain 
\begin{align}\label{eq:postselection_probability_noisy}
    p\rbracket{i< n+1|\N^{\eta},\rho} = \sum_{\gamma=1}^{\alpha} p_{\gamma} \sum_{i\in X_{\gamma}} p\rbracket{i|\N^{\eta,\gamma},\rho} = \eta\ q +\rbracket{1-\eta} \frac{\left<|X_{\gamma}|\right>}{d_{tot}} \ ,
\end{align}
where we defined $\left<|X_{\gamma}|\right> \coloneqq \sum_{\gamma=1}^\alpha p_{\gamma} |X_{\gamma}|$ and used the fact that $\sum_{i} p\rbracket{i|\M,\rho}=1$.

Now we rewrite the Eq.~\eqref{eq:postselected_distance} as
\begin{align}\label{eq:postselected_distance_funny}
      \frac{1}{2\ p\rbracket{i< n+1|\N^{\eta},\rho}} \sum_{i=1}^{n}| p\rbracket{i< n+1|\N^{\eta},\rho}\ p\rbracket{i|\M,\rho}-p\rbracket{i|\N^{\eta},\rho}| \ ,  
\end{align}
We calculate each of the summands explicitly using  Eq.~\eqref{eq:probability_distortion_post} and Eq.~\eqref{eq:postselection_probability_noisy} and obtain  
\begin{align}
    |p\rbracket{i< n+1|\N^{\eta},\rho}\ p\rbracket{i|\M,\rho}-p\rbracket{i|\N^{\eta},\rho}| = \frac{\rbracket{1-\eta}}{d_{tot}} \  |\ p\rbracket{i|\M,\rho}\ \left<|X_{\gamma}|\right> - p_{\gamma(i)} \ | \ ,
\end{align}
Using the bound $|a-b|\leq |a|+|b|$ and summing over $i$ we obtain
\begin{align}\label{eq:inequality_sum_post}
    \frac{\rbracket{1-\eta}}{d_{tot}} \  \sum_{i=1}^{n} |\ p\rbracket{i|\M,\rho}\ \left<|X_{\gamma}|\right> - p_{\gamma(i)} \ | \ & \leq       \frac{\rbracket{1-\eta}}{d_{tot}} \  \sum_{i=1}^{n} \ \rbracket{\  p\rbracket{i|\M,\rho}\ \left<|X_{\gamma}|\right> + p_{\gamma(i)}  }  =     \frac{2\rbracket{1-\eta} \left<|X_{\gamma}|\right>}{d_{tot} }  \ . 
\end{align}
Consider our scheme for the special choice $n=d^2$ and $\so = d+1$, hence $d_{tot} = 2d$ and  $|X_{\gamma}| = d$ for all $\gamma$. This gives  $\frac{2\left<|X_{\gamma}|\right>}{d_{tot}} = 1$. 
Combining this with the inequality in Eq.~\eqref{eq:inequality_sum_post} and the Eq.~\eqref{eq:postselected_distance_funny}  yields that for our scheme we have
\begin{align}\label{eq:noisy_post_bound}
\mathrm{d_{TV}}\rbracket{\pv{\M}{\rho},\mathbf{p}^{\mathrm{noisy}}_{\mathrm{post}}(\M|\rho)} \leq \frac{1}{2}\  \frac{\rbracket{1-\eta}}{\ p\rbracket{i< n+1|\N^{\eta},\rho}} = \frac{1}{2} \  \frac{\rbracket{1-\eta}}{\eta \ q + \rbracket{1-\eta} \frac{1}{2}}\leq (1-\eta)\max\lbrace\frac{1}{2q},1 \rbrace  \ . 
\end{align}
For Haar-random rank-one POVMs we have $\q(\M^U)>c$ (see Theorem \ref{thm:conc_q_lower_upper}), where $c$
 is an absolute constant. Combining this with the fact that for generic unitaries on $2N$ qubits we have $\eta^{\mathrm{post}}= \exp(-\Theta(4^N)) $ we obtain the assertion made in the main text, i.e, that for typical Haar-random $d^2$-outcome POVMs $\M^U$ we have
 \begin{equation}
    \mathrm{d_{TV}}\rbracket{\pv{\M^U}{\rho},\mathbf{p}^{\mathrm{noise}}_{\mathrm{post}}(\M^U|\rho)}\leq C (1-\exp\rbracket{-\Theta\rbracket{4^N}})\ .
\end{equation}

\subsection{Noise model details}

In the main text and in previous subsections, we adopted a very simple noise model parametrized by only single number -- visibility $\eta$.
The main motivation for that choice was the fact that since we consider mostly generic Haar-random POVMs, the circuits which implement them can be considered random, and that such model was considered in Google's recent demonstration of computational advantage (which used random circuits) \cite{Google2019}. 
In Ref.~\cite{Google2019}, authors consider $\eta$ of the following form \cite{Boixo2018}
\begin{equation}
\label{eq:eta}
    \eta = \exp\left(-r_1g_1 -r_2g_2-N(r_p + r_m)\right),
\end{equation}
where $r_1$, $r_2$ are respectively the error rates for single and two-qubit gates, $g_1$, $g_2$ are number of single-qubit and two-qubit gates, $N$ is the total number of qubits in the circuit, and $r_p$ and $r_m$ are SPAM (state preparation and measurement) errors.
As indicated in the main text, since generic circuits require number of two-qubit gates scaling exponentially with the system size, we considered faulty two-qubit gates as the main error source.
We note, however, that in the above model it is in fact assumed that readout noise can be effectively treated as uncorrelated and identical.
In presence of the measurement noise cross-talk, the more realistic noise model should be considered (see, for example, recent works \cite{Bravyi2020,Maciejewski2021}).






\section{Numerical results} \label{sec:numerics}

\subsection{IC and SIC POVMs}

\subsubsection{Informationally complete measurements covariant with respect to $\mathbb{Z}_d \times\mathbb{Z}_d$}
To explain how we construct informationally complete (IC) measurements, let us first recall that a POVM is called covariant with respect to a group, if all of the measurement operators can be obtained from some \textit{fiducial vector} by the action of that group.
Hence if one has a way of constructing that fiducial vector and the unitary representation of chosen group, one can easily generate all of the effects of covariant measurement.
In this work we use the explicit construction from Ref.~\cite{DAriano2004} which shows how to obtain fiducial vector for the POVM covariant with respect to $\mathbb{Z}_d \times\mathbb{Z}_d$ (which can be thought of as finite-dimensional analogue of Weyl-Heisenberg group), where $d$ is the dimension of the system.
Such POVM has $d^2$ rank-1 effects and is shown to be informationally-complete \cite{DAriano2004}.
A fiducial vector is constructed as
\begin{align}
  \ket{\psi_\alpha} =   \sqrt{\frac{1-|\alpha|^2}{1-|\alpha|^{2d}}} \ \sum_{i=0}^{d-1}\alpha^i \ket{i},
\end{align}
where $\alpha$ is a parameter characterizing the POVM and has to fulfill condition $0<|\alpha|<1$
Now vectors defining other effects of that POVM are obtained as
\begin{align}
    \ket{\psi_{m,n}}=\frac{1}{\sqrt{d}}U_{m,n}\ket{\psi_\alpha} \ ,
\end{align}
where $U_{m,n}$ is a (projective) unitary representation of $\mathbb{Z}_d \times\mathbb{Z}_d$ given by

\begin{align}
    U_{m,n} = \sum_{k=0}^{d-1}\ \exp\left(\frac{2\pi i}{d}km\right)\ket{k}\bra{k\oplus n} \ ,
\end{align}
with $m,n\in \left[0,d-1\right]$ and $\oplus$ is addition modulo $d$.
See Ref.~\cite{DAriano2004} for more details. In our simulations we arbitrarily choose the free parameter to be $   \alpha = \frac{1}{2}\left(1+i\right)$. We note that we checked a few other instances of this parameter and we did not observe quantitative differences in the probability of success of POVMs simulation using our scheme.

\subsubsection{Symmetric Informationally Complete measurements}
The measurement is called symmetric if its effects have equal pairwise Hilbert-Schmidt scalar products.
The search for symmetric and informationally complete (SIC) measurements is an active area of research \cite{Fuchs2017SIC} and even existence of SICs in arbitrary dimension $d$ is an open problem.
To date, SIC POVMs have been found either numerically or analytically for a restricted collection of dimensions 
\cite{Scott2010symmetric,scott2017sics,Grassl2017fibonacci}.
SIC POVMs are, similarly to IC, represented by a single fiducial vector and we generate other measurement operators from that vector by the action of $\mathbb{Z}^d \times\mathbb{Z}_d$ group (we note that all SIC POVMs found to date are covariant with respect to some group, and the most of them covariant to $\mathbb{Z}^d \times\mathbb{Z}_d$ group). 

In this work, the POVMs in dimensions $d\in \left[2,100\right]$ have been downloaded from database \cite{SICPOVM_catalogue} maintained by Christopher A. Fuchs, Michael C. Hoang, and Blake C. Stacey. 
The POVMs for dimensions
\begin{align}
    \left[100,193\right] \cup \cbracket{194,195,201,204,224,228,255,259,288,292,323,327,364,399,403,489,528,725,844,1155,1299}
\end{align}
were provided by Markus Grassl in private correspondence.


\subsection{Haar-random POVMs}

In this work, we are interested in generating Haar random $d$-dimensional POVMs with $d^2$ outcomes.
A straightforward method to do so would be to generate Haar-random $d^2 \times d^2$ unitary matrix and take its $d^2 \times d$ submatrix as defining such POVM.
However, generation of random matrices quickly becomes unfeasible -- due to large amount of memory required, we were not able to generate such matrices for high $d$.
As a workaround, instead of generating random $d^2 \times d^2$  unitary matrices, we generated random $d^2 \times d$ isometries.
To do so, we implemented the following algorithm.
\begin{enumerate}
\item Generate $d$ iid random complex Gaussian vectors of size $d^2$ -- call them $\cbracket{\mathbf{v}_i}_{i=1}^d$.
\item Construct a Gramian matrix $G$  of those vectors as
\begin{align}
    G_{ij}  = \braket{\mathbf{v}_i}{\mathbf{v}_j} \ .
\end{align}
\item Perform LDL decomposition of the Gramian matrix as
\begin{align}
    G = L\sqrt{D}\sqrt{D}L^{\ast} \ ,
\end{align}
where $L$ is lower-triangular and $D$ diagonal.
\item Define $R = \rbracket{\sqrt{D}L^{\ast}}^{-1}$ and construct new set of vectors as
\begin{align}
    \mathbf{e}_k= \sum_{i} R_{i,k} \mathbf{v}_i \ . 
\end{align}
It follows that $\cbracket{\mathbf{e}_k}_{k=1}^{d}$ forms an orthonormal set of $d^2$-dimensional random vectors.
Hence those vectors can be used to construct a $d^2 \times d$ isometry.
\item 
To construct a POVM one simply looks at rows of this isometry as a set of $d^2$ vectors of dimension $d$.
Since the matrix is an isometry, it follows that those rows define rank-1 effects of $d^2$-outcome random POVM.
\end{enumerate}


\end{document}